\documentclass[submission,copyright,creativecommons]{eptcs}
 % Name of the event you are submitting to
\usepackage{breakurl}             % Not needed if you use pdflatex only.

\usepackage{amssymb}

\newtheorem{theorem}{Theorem}

\newtheorem{corollary}[theorem]{Corollary}

\newtheorem{definition}[theorem]{Definition}

\newtheorem{lemma}[theorem]{Lemma}

\newtheorem{proposition}[theorem]{Proposition}

\newenvironment{proof}[1][Proof]{\noindent\textbf{#1.} }{\ \rule{0.5em}{0.5em}}

\title{Sequent Calculi for the classical fragment of Bochvar and Halld\'en's Nonsense Logics}
\author{Marcelo E. Coniglio \qquad\qquad Mar\'ia I. Corbal\'{a}n
\institute{Centre for Logic, Epistemology and the History of Science\\
and\\
Department of Philosophy\\
State University of Campinas, Brazil}
\email{\quad coniglio@cle.unicamp.br \quad\qquad  inescorbalan@yahoo.com.ar}
}

\begin{document}
\maketitle

\begin{abstract}
In this paper sequent calculi for the  classical  fragment (that is, the conjunction-disjunction-implica\-tion-negation fragment) of the nonsense logics \textbf{B}$_{3}$, introduced by  Bochvar, and \textbf{H}$_{3}$, introduced by Halld\'en, are presented. 
These calculi are obtained by  restricting in an appropriate way the application of the rules of a sequent calculus for classical propositional logic  \textbf{CPL}.  The nice symmetry between the provisos in the rules reveal the semantical  relationship between these logics.  The Soundness and Completeness theorems for both calculi are obtained, as well as the respective Cut elimination theorems.
\end{abstract}

\section*{Introduction}

The study of logical paradoxes from a formal perspective has produced several proposals in the literature. In particular, 3-valued propositional logics were proposed in which, besides the two `classical' truth-values, the third one plays the role of  a `nonsensical'  or `meaningless' truth value. This is why these logics are known as `logics of nonsense'. In 1938 (\cite{boch:38})   A. Bochvar introduced the first logic of nonsense, by means of 3-valued logical matrices. Since the nonsensical truth value is not distinguished, Bochvar's logic is paracomplete but it is not paraconsistent: the negation $\lnot$ is explosive (from a contradiction everything follows) but the third-excluded law does not hold. In 1949 S. Halld\'en (\cite{hall:49})  proposed a closely related logic of nonsense by means of 3-valued logical matrices in which the third truth-value is distinguished, producing a paraconsistent, non-paracomplete logic.

Both logics share the same main feature: the nonsensical truth-value is `infectious' in the sense that, given a valuation $v$, every formula having at least one propositional variable with nonsensical truth-value under $v$ also gets the non-sensical truth-value under $v$.  Also, both logics contain, besides the connectives $\lnot$ for negation and $\wedge$ conjunction,  an unary connective which allows to recover all the classical inferences (cf.~\cite{con:cor:11,cor:12}).

The respective `classical' fragments of each of these two logics (that is, the $\{\lnot,\vee,\wedge,\to\}$-fragments)  are interesting since they together  constitute the only two possibilities for extending the usual matrices of  classical logic with a third nonsensical,  `infectious' truth-value $\frac12$: either $\frac12$ is designated or it is not. The former corresponds to the  `classical' fragment of Halld\'en's logic, while the  latter corresponds to the  same fragment of Bochvar's  logic. It is not hard to establish, by semantical means, a relationship between these two fragments and classical logic: given a classically valid  inference $\Gamma \vdash \alpha$ over the  language generated by $\{\lnot,\vee,\wedge,\to\}$, if the propositional variables ocurring in $\Gamma$ also occur in $\alpha$ then $\Gamma \vdash \alpha$   is valid in Halld\'en's logic \textbf{H}$_{3}$. Dually, if the propositional variables ocurring in $\alpha$ also occur in $\Gamma$ then such classically valid inference is valid in Bochvar's logic \textbf{B}$_{3}$. This duality  is a direct consequence of the criterion adopted in each logic with respect to the third truth-value (namely, designated vs. non-designated), and the fact that this non-sensical truth-value propagates through any complex formula.  Since $\{\vee, \to\}$ and $\{\wedge,\to\}$ can be defined as usual from  $\{\lnot,\wedge\}$ and $\{\lnot,\vee\}$, respectively, the observation above can also be applied to the $\{\lnot,\wedge\}$ and  $\{\lnot,\vee\}$-fagments of both logics.

This paper introduced two cut-free sequent calculi for the $\{\lnot,\vee,\wedge,\to\}$-fragment of each logic of nonsense mentioned above. Both systems are obtained by imposing restrictions on the rules of the usual sequent calculus for classical propositional logic  \textbf{CPL}. In the calculus for the classical fragment of  Halld\'en's logic, the introduction rules for conjunction, implication and negation on the left side of the sequent are restricted. In the calculus for the fragment of  Bochvar's logic the restriction is imposed to the introduction rule for disjunction, implication and  negation on the right side.  In this manner, the  relationship  between classical logic and both logics  became explicit through restrictions on the rules for the logical connectives $\lnot$, $\wedge$, $\vee$ and $\to$.

\section{Preliminaries} \label{prelim}

Along this paper, we fix a denumerable set  $prop$  of propositional variables, as well as three propositional  signatures: $\Sigma_1$ just containing a negation (unary) connective $\lnot$ and a disjunction (binary) connective $\vee$; $\Sigma_2$ just containing negation $\lnot$ and a conjunction (binary) connective $\wedge$; and $\Sigma_0$, containing $\lnot$, $\vee$, $\wedge$, and an implication (binary) connective $\to$. The set of formulas generated by $\Sigma_i$ and $prop$ will be denoted by $For_i$, for $i=0,1,2$. 
The disjunction $\vee$ and the implication $\rightarrow$ are defined in $For_2$ as  $\alpha \vee \beta =_{def}\lnot \left( \lnot \alpha \wedge \lnot \beta
\right) $ and $\alpha \to \beta =_{def}\lnot \left( \alpha \wedge \lnot
\beta \right) $, respectively. By its turn, the conjunction $\wedge$ and the implication $\to$ are defined in $For_1$ as  $\alpha \wedge \beta =_{def}\lnot \left( \lnot \alpha \vee \lnot \beta
\right) $ and $\alpha \to \beta =_{def}\lnot  \alpha \vee \beta$, respectively.

For $i=0,1,2$, the function $var:For_i \rightarrow \wp \left( prop\right) $ which
assigns to each formula the set of propositional variables appearing in it  is defined recursively
as usual.  
When $\Gamma \subseteq For_i$ is a set of formulas then $var\left( \Gamma \right)
=\bigcup_{\gamma \in \Gamma} var\left( \gamma \right)$.

The next step is to recall a well-known cut-free sequent calculus for classical propositional logic  \textbf{CPL} defined over the signature $\Sigma_0$.

\begin{definition}
By a \emph{sequent} \emph{S} over $\Sigma_i$ ($i=0,1,2$) we shall mean an ordered pair $\left\langle
\Gamma ,\Delta \right\rangle $ of (non-simultaneously empty) finite sets of
formulas in $For_i$.
\end{definition}

 We shall use the more suggestive notation $\Gamma \Rightarrow
\Delta $ for the sequent $\left\langle \Gamma ,\Delta \right\rangle $. Sequents of the form $\left\langle
\Gamma ,\emptyset \right\rangle$, $\left\langle
\emptyset ,\Delta \right\rangle$, $\left\langle
\Gamma ,\{\alpha\} \right\rangle$ and  $\left\langle
\{\alpha\} ,\Delta \right\rangle$  will be denoted by   $\Gamma \Rightarrow$, \ $\Rightarrow \Delta $, \ 
 $\Gamma \Rightarrow \alpha $ and $\alpha \Rightarrow \Delta $, respectively. As usual, we write $\alpha ,\Gamma$ (or $\Gamma, \alpha$) and $\alpha , \beta, \Gamma$ (or $\Gamma, \alpha,\beta$) instead of 
$\Gamma \cup \left\{ \alpha \right\}$ and $\Gamma \cup \left\{ \alpha,\beta \right\}$, respectively.

\begin{definition} \label{seqC}

The sequent calculus \textbf{C}  over $\Sigma_0$  is defined as follows:

\paragraph{Axioms}
\[
\textrm{Ax }\textrm{ }%
\displaystyle \frac{}{\alpha \Rightarrow \alpha}
\]

\paragraph{Structural rules}

\begin{center}
\bigskip 
\[
\textrm{W }\mathbf{\Rightarrow }\textrm{ }%
\begin{tabular}{c}
$\Gamma \Rightarrow \Delta $ \\ \hline
$\alpha ,\Gamma \Rightarrow \Delta $%
\end{tabular}%
\hspace{1cm}\medskip \mathbf{\Rightarrow }\textrm{ W }%
\begin{tabular}{l}
$\Gamma _{{}}\Rightarrow \Delta $ \\ \hline
$\Gamma _{{}}\Rightarrow \Delta ,\alpha $%
\end{tabular}%
\hspace{1cm}\medskip 
\textrm{Cut }%
\begin{tabular}{cc}
$\Gamma \Rightarrow \Delta ,\alpha $ & $\alpha ,\Gamma \Rightarrow \Delta $
\\ \hline
\multicolumn{2}{c}{$\Gamma \Rightarrow \Delta $}%
\end{tabular}%
\]%
\end{center}

\paragraph{Operational rules}

\begin{center}
\[
\lnot \Rightarrow \textrm{ }%
\begin{tabular}{l}
$\Gamma \Rightarrow \Delta ,\alpha $ \\ \hline
$\lnot \alpha ,\Gamma \Rightarrow \Delta $%
\end{tabular}%
\medskip \hspace{1cm}\Rightarrow \lnot \textrm{ }%
\begin{tabular}{l}
$\alpha ,\Gamma \Rightarrow \Delta $ \\ \hline
$\Gamma \Rightarrow \Delta ,\lnot \alpha $%
\end{tabular}%
\hspace{0.4cm} 
\]%
\[
\wedge \Rightarrow \textrm{ }%
\begin{tabular}{l}
$\alpha _{1},\alpha _{2},\Gamma \Rightarrow \Delta $ \\ \hline
$\alpha _{1}\wedge \alpha _{2},\Gamma \Rightarrow \Delta _{{}}$%
\end{tabular}%
\medskip \hspace{1cm}\Rightarrow \wedge \textrm{ }%
\begin{tabular}{cc}
$\Gamma \Rightarrow \Delta ,\alpha _{1}$ & $\Gamma \Rightarrow \Delta
,\alpha _{2}$ \\ \hline
\multicolumn{2}{c}{$\Gamma \Rightarrow \Delta ,\alpha _{1}\wedge \alpha _{2}$%
}%
\end{tabular}%
\hspace{0.4cm} 
\]%
\[
\vee \Rightarrow \textrm{ }%
\begin{tabular}{cc}
$\alpha _{1},\Gamma \Rightarrow \Delta $ & $\alpha _{2},\Gamma \Rightarrow
\Delta $ \\ \hline
\multicolumn{2}{c}{$\alpha _{1}\vee \alpha _{2},\Gamma \Rightarrow \Delta $}%
\end{tabular}%
\noindent \hspace{1cm}\Rightarrow \vee \textrm{ }%
\begin{tabular}{l}
$\Gamma \Rightarrow \Delta ,\alpha _{1},\alpha _{2}$ \\ \hline
$\Gamma \Rightarrow \Delta ,\alpha _{1}\vee \alpha _{2}{}_{{}}$%
\end{tabular}%
\hspace{0.4cm} 
\]%
\[
\rightarrow \Rightarrow \textrm{ }%
\begin{tabular}{cc}
$\Gamma \Rightarrow \Delta ,\alpha _{1}$ & $\alpha _{2},\Gamma \Rightarrow
\Delta $ \\ \hline
\multicolumn{2}{c}{$\alpha _{1}\rightarrow \alpha _{2},\Gamma \Rightarrow
\Delta $}%
\end{tabular}%
\noindent \hspace{1cm}\Rightarrow \rightarrow \textrm{ }%
\begin{tabular}{l}
$\alpha _{1},\Gamma \Rightarrow \Delta ,\alpha _{2}$ \\ \hline
$\Gamma \Rightarrow \Delta ,\alpha _{1}\rightarrow \alpha _{2}{}_{{}}$%
\end{tabular}%
\noindent \hspace{1cm} 
\]
\end{center}

\end{definition}

\bigskip

For $i=0,1,2$, consider the usual classical valuations from $For_i$ over  the set  $V_{\mathbf{CPL}}=\left\{ 1,0\right\} $ of \emph{classical truth-values}, where 1 denotes the \textquotedblleft true\textquotedblright\
value and 0 denotes the \textquotedblleft false\textquotedblright\ value. Let  $\vDash_{\mathbf{CPL}}$ be the semantical consequence relation of \textbf{CPL} over $For_0$, that is: $\Gamma \vDash_{\mathbf{CPL}} \alpha$ iff, for every classical valuation $v$: if $v(\gamma) =1$ for every $\gamma \in \Gamma$ then $v(\alpha) =1$. The following theorems are well known:

\begin{theorem} [Soundness and Completeness of  \textbf{C}] \label{compleCPL}
Let $\Gamma \cup\Delta$ be a finite set of formulas in $For_0$. Then: the sequent $\Gamma\Rightarrow\Delta$ is provable in \textbf{C} iff $\Gamma \vDash_{\mathbf{CPL}} \bigvee_{\alpha \in \Delta} \alpha$.\footnote{Here, $\bigvee_{\alpha \in \Delta} \alpha$ denotes the formula $\alpha_1 \vee (\alpha_2 \vee \cdots (\alpha_{n-1} \vee \alpha_n)\ldots)$, if $\Delta=\{\alpha_1,\ldots,\alpha_n\}$. If $\Delta=\{\alpha\}$ or $\Delta=\emptyset$ then $\bigvee_{\alpha \in \Delta} \alpha=\alpha$ and $\bigvee_{\alpha \in \Delta} \alpha=p_1\wedge\neg p_1$, respectively, where $p_1$ is the first propositional variable.} 
In particular: the sequent $\Gamma\Rightarrow\alpha$ is provable in \textbf{C} iff $\Gamma \vDash_{\mathbf{CPL}} \alpha$. The same holds for the  $\{\lnot,\vee\}$ and the $\{\lnot,\wedge\}$-fragments of \textbf{C}.
\end{theorem}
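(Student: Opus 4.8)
The plan is to prove the two directions by the standard structural methods, reading the sequent $\Gamma \Rightarrow \Delta$ as the semantic assertion that no classical valuation simultaneously verifies every formula of $\Gamma$ and falsifies every formula of $\Delta$. This is exactly the condition $\Gamma \vDash_{\mathbf{CPL}} \bigvee_{\alpha \in \Delta}\alpha$ once the footnote conventions are taken into account: the case $\Delta=\{\alpha\}$ is immediate, and the case $\Delta=\emptyset$ (where $\bigvee_{\alpha\in\Delta}\alpha = p_1\wedge\neg p_1$) asserts precisely that $\Gamma$ is unsatisfiable, which agrees with the reading of the sequent $\Gamma\Rightarrow$ having an empty succedent.

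For soundness I would argue by induction on the height of a derivation in $\mathbf{C}$. The base case is the axiom $\alpha \Rightarrow \alpha$, whose validity is immediate. For the inductive step I would check that each structural and operational rule preserves validity downward, i.e.\ that validity of the premises forces validity of the conclusion. The weakening rules are trivial, $\mathrm{Cut}$ is the usual transitivity argument, and every operational rule reduces to a one-line truth-functional verification reflecting the semantic clause of the introduced connective (for instance $\Rightarrow\wedge$ uses that $v(\alpha_1\wedge\alpha_2)=1$ iff $v(\alpha_1)=v(\alpha_2)=1$).

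For completeness I would use root-first (backward) proof search. The essential fact is that in classical logic every operational rule is \emph{invertible}: the conjunction of its premises is semantically equivalent to its conclusion, so passing upward preserves validity in both directions. Starting from a valid sequent $\Gamma \Rightarrow \Delta$, I repeatedly apply the operational rules in reverse to decompose non-atomic formulas; since each step strictly decreases the total number of connective occurrences in the sequent, the search terminates at leaves whose antecedents and succedents consist solely of propositional variables. By invertibility each leaf $\Gamma' \Rightarrow \Delta'$ is again valid, and a valid atomic sequent must satisfy $\Gamma' \cap \Delta' \neq \emptyset$ --- otherwise the valuation sending the variables of $\Gamma'$ to $1$ and those of $\Delta'$ to $0$ would be a counterexample. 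Hence every leaf is obtainable from an instance of $\mathrm{Ax}$ by weakenings, and reassembling the tree yields the desired derivation, which moreover uses no $\mathrm{Cut}$.

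The point demanding the most care is the uniform verification of invertibility for each rule together with the correct bookkeeping for the degenerate sequents with empty antecedent or succedent, so that the reduction to atomic leaves and the semantic reading of $\bigvee_{\alpha\in\Delta}\alpha$ stay coherent throughout. Finally, the assertion for the $\{\lnot,\vee\}$- and $\{\lnot,\wedge\}$-fragments follows by the \emph{same} argument restricted to the available connectives: by the subformula property of the backward search, decomposing a formula of $For_1$ (resp.\ $For_2$) only ever invokes the rules for $\lnot$ and $\vee$ (resp.\ $\lnot$ and $\wedge$), so the entire derivation remains inside the corresponding fragment of $\mathbf{C}$.
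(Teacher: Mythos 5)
Your argument is correct, and it is worth noting that the paper itself offers no proof of this statement at all: Theorem~\ref{compleCPL} is presented as a well-known fact about the classical sequent calculus, serving only as an imported ingredient for the completeness proofs of \textbf{H} and \textbf{B}. So there is no in-paper argument to compare against; what you have written is the standard textbook route (soundness by induction on derivation height, completeness by root-first search exploiting the semantic invertibility of every operational rule), and it is sound. A side benefit of your method is that it produces cut-free derivations directly, so it simultaneously establishes the paper's Theorem~\ref{cut-freeC}, including the fragment versions, which the paper also only cites. Two bookkeeping points deserve to be made explicit. First, since the paper's sequents are pairs of finite \emph{sets}, when you read a rule bottom-up you should take the context to exclude the principal formula (e.g.\ decompose $\alpha_1\wedge\alpha_2,\Gamma\Rightarrow\Delta$ with $\alpha_1\wedge\alpha_2\notin\Gamma$); otherwise the premise can coincide with the conclusion and your connective-counting measure would not strictly decrease. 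With that convention the termination argument is fine. Second, the paper's sequents are required to be non-simultaneously-empty, and your search respects this automatically because every premise of an operational rule contains the immediate subformulas of the principal formula; similarly, your observation that a valid atomic sequent $\Gamma'\Rightarrow\Delta'$ must satisfy $\Gamma'\cap\Delta'\neq\emptyset$ (so in particular $\Delta'\neq\emptyset$) keeps the degenerate cases coherent with the footnote's reading of $\bigvee_{\alpha\in\Delta}\alpha$, including $\Delta=\emptyset$ as unsatisfiability of $\Gamma$. Your closing remark that the backward search has the subformula property, so that sequents over $For_1$ (resp.\ $For_2$) only ever invoke the $\lnot$ and $\vee$ (resp.\ $\lnot$ and $\wedge$) rules and the weakenings at the leaves introduce only variables of the sequent, correctly disposes of the fragment claims.
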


\begin{theorem} [Cut elimination for \textbf{C}] \label{cut-freeC}  Let $\Gamma \cup \Delta $ be a finite nonempty set of formulas in $For_0$.
If the sequent $\Gamma \Rightarrow \Delta$ is provable in \textbf{C} then there is a cut-free derivation of it in  \textbf{C}, that is,  a derivation without using the Cut rule. The same holds for the $\{\lnot,\vee\}$ and the $\{\lnot,\wedge\}$-fragments of \textbf{C}.
\end{theorem}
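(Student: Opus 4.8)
The statement is Gentzen's \emph{Hauptsatz} for the propositional classical calculus, so the plan is to carry out the standard syntactic cut-elimination argument adapted to the set-based sequents of Definition~\ref{seqC}. Because the antecedent and succedent are \emph{sets}, exchange and contraction are absorbed into the notation and never appear as explicit inferences; only weakening (W$\Rightarrow$, $\Rightarrow$W) is a primitive structural rule. I would first reduce the theorem to the elimination of a single topmost cut: given a derivation of $\Gamma\Rightarrow\Delta$ in \textbf{C}, choose an uppermost application of Cut, so that both of its premises are derived cut-free, and show that this one cut can be removed. Iterating on the number of Cut applications in the derivation then yields a cut-free derivation of $\Gamma\Rightarrow\Delta$.

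The heart of the matter is the \emph{Main Lemma}: a derivation made of two cut-free derivations of $\Gamma\Rightarrow\Delta,\alpha$ and $\alpha,\Gamma\Rightarrow\Delta$ followed by one Cut on $\alpha$ can be converted into a cut-free derivation of $\Gamma\Rightarrow\Delta$. I would prove it by a double induction: the \emph{principal} induction on the degree of the cut formula $\alpha$ (its number of connectives) and, within it, a \emph{secondary} induction on the cut-height, the sum of the heights of the two premise-derivations. The cases split according to the last inference of each premise. \textbf{(i)} If a premise is an axiom $\alpha\Rightarrow\alpha$, the conclusion follows from the other premise by weakening and the cut vanishes. \textbf{(ii)} If $\alpha$ is \emph{not} principal in the last inference of one premise, that inference only acts on side formulas and the Cut can be permuted above it, producing cut(s) on the \emph{same} $\alpha$ of strictly smaller cut-height, so the secondary hypothesis applies. \textbf{(iii)} If $\alpha$ is principal in both last inferences, I would apply the key reduction replacing the cut by cuts on the immediate subformulas of $\alpha$, of strictly smaller degree, so the principal hypothesis applies. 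For $\alpha=\alpha_1\wedge\alpha_2$, introduced on the left by $\Rightarrow\wedge$ from $\Gamma\Rightarrow\Delta,\alpha_1$ and $\Gamma\Rightarrow\Delta,\alpha_2$ and on the right by $\wedge\Rightarrow$ from $\alpha_1,\alpha_2,\Gamma\Rightarrow\Delta$, one first cuts on $\alpha_2$ (between the weakened sequent $\alpha_1,\Gamma\Rightarrow\Delta,\alpha_2$ and $\alpha_1,\alpha_2,\Gamma\Rightarrow\Delta$) to get $\alpha_1,\Gamma\Rightarrow\Delta$, then cuts on $\alpha_1$ against $\Gamma\Rightarrow\Delta,\alpha_1$ to get $\Gamma\Rightarrow\Delta$; the cases for $\vee$, $\to$ and $\lnot$ are analogous, the negation case yielding a single cut on the immediate subformula with the roles of the two premises exchanged.

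The delicate point, and the one I expect to be the main obstacle, is the interaction between Cut and weakening in the permutation cases (ii). Since the operational rules share their contexts additively while weakening is a primitive rule, realigning contexts so as to push the Cut above a side inference may force the insertion of weakenings, which naively \emph{increases} the cut-height and blocks the secondary induction; this is precisely the phenomenon that led Gentzen to replace Cut by the \emph{mix} rule. I would circumvent it by first establishing, as a preliminary lemma by induction on derivations, that weakening is height-preserving admissible in \textbf{C} (equivalently, by passing to the weakening-absorbed variant with axioms $p,\Gamma\Rightarrow\Delta,p$, in which weakening and contraction are admissible); once weakening no longer costs height, every permutation step lowers the cut-height and the double induction closes. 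The remaining work is routine, one reduction per operational rule. The claim for the $\{\lnot,\vee\}$- and $\{\lnot,\wedge\}$-fragments is then immediate, as these are the subsystems of \textbf{C} retaining only the rules for the connectives of $\Sigma_1$ and $\Sigma_2$, and the very same induction applies with fewer principal cases. Alternatively, one may obtain cut elimination semantically: by the soundness half of Theorem~\ref{compleCPL} a provable sequent is valid, and a terminating backward proof search using the (bottom-up invertible) operational rules shows that every valid sequent already has a cut-free derivation.
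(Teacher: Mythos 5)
Your proposal cannot diverge from ``the paper's own proof'' for the simple reason that the paper does not prove this theorem at all: Theorem~\ref{cut-freeC} is stated, together with Theorem~\ref{compleCPL}, under the preamble that these results are well known, and it is consumed as a black box (e.g.\ at the start of the proof of Lemma~\ref{variablesincluidasH}, where a cut-free classical derivation is assumed ``by Theorem~\ref{cut-freeC}''). Judged on its own, your sketch is a correct rendering of the standard Gentzen argument for this set-based presentation: the double induction on cut degree and cut-height, the axiom/non-principal/principal case analysis, the per-connective reductions (your $\wedge$- and $\lnot$-reductions check out), and the observation that the $\{\lnot,\vee\}$- and $\{\lnot,\wedge\}$-fragments inherit the proof because the induction merely loses principal cases are all sound. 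Two remarks. First, you slightly misattribute the classical obstacle: what drove Gentzen to replace Cut by \emph{mix} was contraction, not weakening; in this calculus contraction is absorbed by the set notation, and its residue is the situation where the cut formula already belongs to a context set (e.g.\ $\alpha \in \Delta$, in which case $\Gamma \Rightarrow \Delta, \alpha$ \emph{is} $\Gamma \Rightarrow \Delta$ and the cut is vacuous, or a premise of a permuted rule that coincides with a weakened conclusion); this is benign here, but your case (ii) should say so explicitly, since with sets one is in effect eliminating a mix rather than a single-occurrence cut. Second, your proposed remedies --- height-preserving admissible weakening, or passage to the $G3$-style variant --- are exactly the standard ways to make the height bookkeeping close, so the argument is complete modulo routine verification. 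Your alternative semantic route (soundness plus completeness via terminating backward proof search with the invertible rules) is also correct, and is arguably closest in spirit to how the paper actually uses the theorem, since the completeness proofs for \textbf{H} and \textbf{B} only ever need cut-free derivability of classically valid sequents.
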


\section{Three-valued nonsense logics \textbf{B}$_{3}$ and \textbf{H}$%
_{3}$}

The logics of nonsense \textbf{B}$_{3}$ of Bochvar and \textbf{H}$_{3}$ of Halld\'{e}n 
are three-valued logics. Their set of truth-values is $V=\left\{ 1,{\frac12},0\right\} $ where the third non-classical truth-value $\frac12$ is interpreted as a nonsensical truth-value. In \textbf{H}$_{3}$ this third
truth-value is designated; on the other hand, $\frac12$ is undesignated in \textbf{B}$_{3}$. So, $D_{%
\mathbf{B}_3}=\left\{ 1\right\} $ is the set of designated values of \textbf{B}$_{3}$ and $D_{\mathbf{H}_3}=\left\{ 1,{\frac12}\right\} $ is the set of designated values of Halld\'en's logic \textbf{H}$%
_{3} $. The logic \textbf{B}$_{3}$ is defined over the signature $\Sigma_2^{\#_{\mathbf{B}}}$ 
obtained from  the signature $\Sigma_2$ by adding an unary `meaningful' connective $\#_{\mathbf{B}}$. By its turn, \textbf{H}$_{3} $ is defined over the signature $\Sigma_1^{\#_{\mathbf{H}}}$ 
obtained from $\Sigma_1$ by adding an unary `meaningful' connective $\#_{\mathbf{H}}$.
By means of the connectives $\#_{\mathbf{B}}$ and $\#_{\mathbf{H}}$ it is possible to express the meaninglessness of a formula at the object-language level of each logic. The abbreviations for defining the other classical connectives in each signature are the same as in \textbf{CPL} (recall Section~\ref{prelim}).
The truth-tables for negation, conjunction, disjunction, implication and  meaningful connectives in 
\textbf{B}$_{3}$ and \textbf{H}$_{3}$ are as follows:%
\[
\begin{tabular}{c|c}
& $\lnot $ \\ \hline
1 & 0 \\ 
$\frac12$
& 
$\frac12$
\\ 
0 & 1%
\end{tabular}%
\hspace{1cm}%
\begin{tabular}{c|ccc}
$\wedge $ & $1$ &
$\frac12$
& 0 \\ \hline
1 & 1 & 
$\frac12$
& 0 \\ 
$\frac12$
& 
$\frac12$
& 
$\frac12$
& 
$\frac12$
\\ 
0 & 0 & 
$\frac12$
& 0%
\end{tabular}%%
\hspace{1cm}%
\begin{tabular}{c|ccc}
$\vee $ & $1$ &
$\frac12$
& 0 \\ \hline
1 & 1 & 
$\frac12$
& 1 \\ 
$\frac12$
& 
$\frac12$
& 
$\frac12$
& 
$\frac12$
\\ 
0 & 1 & 
$\frac12$
& 0%
\end{tabular}%%
\hspace{1cm}%
\begin{tabular}{c|ccc}
$\to $ & $1$ &
$\frac12$
& 0 \\ \hline
1 & 1 & 
$\frac12$
& 0 \\ 
$\frac12$
& 
$\frac12$
& 
$\frac12$
& 
$\frac12$
\\ 
0 & 1 & 
$\frac12$
& 1%
\end{tabular}%
\]
\[
\begin{tabular}{c|c}
& $\#_{\mathbf{B}} $ \\ \hline
1 & 1 \\ 
$\frac12$
& 
$0$
\\ 
0 & 0%
\end{tabular}%
\hspace{1cm}%
\begin{tabular}{c|c}
& $\#_{\mathbf{H}} $ \\ \hline
1 & 1 \\
$\frac12$
& 
$0$
\\ 
0 & 1%
\end{tabular}%
\]

Additionally, $\#_{\mathbf{H}}$ can be defined in terms of the connectives of \textbf{B}$_{3}$ but the same relationship between $\#_{\mathbf{B}}$ and  \textbf{H}$_{3}$ is not true, and so the expressive power of  the matrices of  \textbf{B}$_{3}$ is strictly stronger than that of the matrices  of \textbf{H}$_{3}$ (cf.~\cite{cor:12}).

The key feature of both logics is the following, which can be easily proved by induction on the complexity of the formula $\alpha$:

\begin{proposition} \label{infec}
\label{infecciosidad} Let $\alpha$ be a formula of  \textbf{L} without $\#$ and let $v$ be a valuation
of  \textbf{L}, where \textbf{L} = \textbf{B}$_{3}$ or \textbf{L} = \textbf{H}$_{3}$.
Then: $v\left(\alpha\right) =
{\frac12}$ \ iff \ $v\left(p \right) =
{\frac12}$ for some propositional variable $p\in var\left( \alpha \right) $.
\end{proposition}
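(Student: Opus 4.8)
The plan is to argue by induction on the complexity of $\alpha$ (say, on the number of occurrences of connectives), treating $\mathbf{B}_3$ and $\mathbf{H}_3$ uniformly, since the truth tables involved are shared by both logics and the designated sets play no role in the statement. Throughout I will use that $var$ distributes over immediate subformulas: $var(\lnot\beta)=var(\beta)$ and $var(\beta_1\circ\beta_2)=var(\beta_1)\cup var(\beta_2)$ for every binary connective $\circ$. For the base case, let $\alpha=p$ be a propositional variable; then $var(\alpha)=\{p\}$ and $v(\alpha)=v(p)$, so $v(\alpha)=\frac12$ holds iff $v(p)=\frac12$ for the unique $p\in var(\alpha)$, and the equivalence is immediate.

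The crucial observation, which I would isolate before the inductive step, is that each table above is infectious already at the level of a single connective application. Inspecting the table for $\lnot$ shows $v(\lnot\beta)=\frac12$ iff $v(\beta)=\frac12$; and inspecting the tables for $\wedge$, $\vee$ and $\to$ shows that $v(\beta_1\circ\beta_2)=\frac12$ iff $v(\beta_1)=\frac12$ or $v(\beta_2)=\frac12$, since in each of these three tables the value $\frac12$ occurs in an entry precisely when $\frac12$ occurs among its arguments. This single table-by-table verification is where all the real content sits; once it is in place the induction is routine.

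Granting this observation, the inductive step proceeds by cases on the outermost connective. For negation, $v(\lnot\beta)=\frac12$ iff $v(\beta)=\frac12$, which by the induction hypothesis applied to $\beta$ holds iff some $p\in var(\beta)=var(\lnot\beta)$ has $v(p)=\frac12$. For a binary connective $\circ$, $v(\beta_1\circ\beta_2)=\frac12$ iff $v(\beta_1)=\frac12$ or $v(\beta_2)=\frac12$, which by the induction hypothesis applied to $\beta_1$ and to $\beta_2$ holds iff some variable in $var(\beta_1)$ or in $var(\beta_2)$ — that is, some $p\in var(\beta_1\circ\beta_2)$ — satisfies $v(p)=\frac12$. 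In each case both directions of the equivalence are obtained simultaneously.

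Finally, since the signature of $\mathbf{B}_3$ without $\#$ has $\{\lnot,\wedge\}$ as primitive connectives and that of $\mathbf{H}_3$ has $\{\lnot,\vee\}$, the cases strictly needed are $\lnot$ together with $\wedge$ (respectively $\vee$); the remaining connectives, being defined abbreviations, are covered automatically, though the observation above already validates them directly. I do not expect any genuine obstacle: the argument rests entirely on the elementary inspection of the truth tables confirming that $\frac12$ propagates through one connective application, after which the passage from immediate subformulas to $\alpha$ via the union formula for $var$ is immediate.
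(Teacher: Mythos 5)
Your proof is correct and follows exactly the route the paper indicates: the paper gives no details, stating only that the proposition ``can be easily proved by induction on the complexity of the formula $\alpha$'', and your induction---with the table-by-table check that $\frac12$ propagates through a single application of $\lnot$, $\wedge$, $\vee$, $\to$, plus the remark that only the primitive connectives of each signature are strictly needed---fills in precisely that argument. Nothing further is required.
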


This means that in the `classical' fragment of \textbf{B}$_{3}$ and \textbf{H}$_{3}$ the non-classical truth-value $\frac12$
is  `infectious': an atomic formula `infects' complex formulas with the nonsensical
truth-value. It is easy to prove that, over the respective $\Sigma_i$,
both \textbf{B}$_{3}$ and \textbf{H}$_{3}$ are deductive fragments of
classical logic: every valid inference in \textbf{B}$_{3}$ or in \textbf{H}$%
_{3}$ written in the classical signature $\Sigma_i$ is valid in \textbf{CPL}. In fact, the
following proposition (whose proof is immediate) holds in \textbf{B}$_{3}$ and \textbf{H}$_{3}$.

\begin{proposition} \label{classi}
\label{clasicidad} Let $\alpha$ be a formula of  \textbf{L} without $\#$, let $v_{\mathbf{CPL}}$ be a classical
valuation and let $v_{\mathbf{L}}$ be a valuation of \textbf{L},
where \textbf{L} = \textbf{B}$_{3}$ or \textbf{L} = \textbf{H}$_{3}$. If $v_{\mathbf{L}}\left(
p\right) = v_{\mathbf{CPL}}\left( p \right) $ for every propositional variable $p\in
var\left( \alpha \right) $ then $v_{\mathbf{L}}\left( \alpha \right) =v_{%
\mathbf{CPL}}\left( \alpha \right) $ (and so $v_{\mathbf{L}}\left( \alpha
\right) \in \left\{ 1,0\right\} $).
\end{proposition}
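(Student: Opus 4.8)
The plan is to prove the statement by induction on the complexity of $\alpha$, after reducing everything to a computation inside the classical truth-values $\{1,0\}$. The single observation that makes this work is a finite one: for each connective $c \in \{\lnot,\wedge,\vee,\to\}$, the restriction of its three-valued truth-table (which is the same in $\mathbf{B}_3$ and $\mathbf{H}_3$) to arguments drawn from $\{1,0\}$ reproduces exactly the classical truth-table for $c$. This is verified simply by reading off, in each of the tables above, the rows and columns indexed by $1$ and $0$.

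First I would confine the evaluation to $\{1,0\}$. Since $v_{\mathbf{CPL}}(p)\in\{1,0\}$ for every $p$, the hypothesis gives $v_{\mathbf{L}}(p)=v_{\mathbf{CPL}}(p)\neq\frac12$ for every $p\in var(\alpha)$; by Proposition~\ref{infec} this forces $v_{\mathbf{L}}(\alpha)\neq\frac12$, i.e.\ $v_{\mathbf{L}}(\alpha)\in\{1,0\}$. As every subformula $\beta$ of $\alpha$ satisfies $var(\beta)\subseteq var(\alpha)$, the same argument yields $v_{\mathbf{L}}(\beta)\in\{1,0\}$ for all such $\beta$, so the whole evaluation of $\alpha$ under $v_{\mathbf{L}}$ stays within the classical values.

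Then I would run the induction. The base case $\alpha=p$ is exactly the hypothesis. For the inductive step $\alpha$ is $\lnot\beta$, $\beta_1\wedge\beta_2$, $\beta_1\vee\beta_2$ or $\beta_1\to\beta_2$; by the induction hypothesis $v_{\mathbf{L}}$ and $v_{\mathbf{CPL}}$ assign the immediate subformulas the same value, and by the previous paragraph that value lies in $\{1,0\}$, so the finite observation lets me conclude $v_{\mathbf{L}}(\alpha)=v_{\mathbf{CPL}}(\alpha)$. There is no genuine obstacle: the result is immediate once infectiousness pins the computation to $\{1,0\}$. The only point needing care is that $\mathbf{B}_3$ and $\mathbf{H}_3$ have different primitive connectives ($\{\lnot,\wedge\}$ versus $\{\lnot,\vee\}$, with the remaining connectives defined), but since the defined connectives are built from the primitive ones it suffices to check the agreement on $\{1,0\}$ for the primitives, and the derived connectives inherit it automatically.
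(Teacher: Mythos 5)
Your proof is correct and is precisely the straightforward induction the paper has in mind when it declares the proposition's proof immediate: the three-valued tables of $\mathbf{B}_3$ and $\mathbf{H}_3$ restricted to arguments in $\{1,0\}$ reproduce the classical tables, so agreement on the variables propagates up through the primitive connectives, with the derived connectives of each signature handled automatically as abbreviations. One minor remark: your preliminary appeal to Proposition~\ref{infec} is superfluous, since at each inductive step the induction hypothesis $v_{\mathbf{L}}(\beta)=v_{\mathbf{CPL}}(\beta)$ already forces every subformula value to lie in $\{1,0\}$ (it equals the value of a classical valuation).
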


Despite these similarities,
there are important differences  between \textbf{B}$_{3}$ and \textbf{H}$_{3}$ with respect to classical logic as a consequence of choosing different sets of designed truth-valued:

\begin{itemize}
\item There are no tautological formulas over $\Sigma_2$ in 
\textbf{B}$_{3}$; \textbf{H}$_{3}$ contain every classical tautology over $\Sigma_1$.

\item No contradiction written over $\Sigma_1$ is a trivializing formula in \textbf{H%
}$_{3}$; every contradiction over $\Sigma_2$ is a trivializing formula in \textbf{B}%
$_{3}$.

\item The Deduction Theorem is not valid in \textbf{B}$_{3}$ and \textit{%
modus ponens }is not valid in \textbf{H}$_{3}$. So, the following
metaproperty  does not hold in 
\textbf{B}$_{3}$: \textit{if }$\Gamma ,\alpha \nolinebreak \vDash \nolinebreak \beta 
$, \textit{then} $\Gamma \vDash \alpha \rightarrow \beta $; on the other hand, the following metaproperty  does not hold in \textbf{H}$_{3}$: \textit{if} $\Gamma \nolinebreak \vDash
\nolinebreak \nolinebreak \alpha \rightarrow \nolinebreak \beta $, \textit{%
then} $\Gamma ,\alpha \vDash \beta $.

\item The inference $\alpha \vDash \alpha \vee \beta $ does not hold in \textbf{B}$_{3}$; in \textbf{H}$_{3}$
the inference $\alpha \wedge \beta \vDash \alpha $ does not hold.

\item In \textbf{B}$_{3}$ the Principle of Excluded Middle: 
\begin{center}
\hspace*{4cm}$\vDash \alpha \vee \lnot \alpha  \hspace{4.5cm} \textrm{(PEM)}$
\end{center}
\noindent does not hold; in \textbf{H}$_{3}$ the Principle of Explosion: 
\begin{center}
\hspace*{4cm}$\alpha ,\lnot \alpha \vDash \beta  \hspace{4.5cm} \textrm{(PE)}$
\end{center}
\noindent does not hold. Thus,  \textbf{B}$_{3}$ is  paracomplete w.r.t.  the negation $\lnot$, while  \textbf{H}$_{3}$ is paraconsistent w.r.t. $\lnot$.
\end{itemize}

These differences between Bochvar and Halld\'en's connectives with respect
to classical connectives are not independent from each other, and their
connections are expressed in the following theorems, 
which constitute the basis of our proposal.

\begin{theorem}
\label{validezB} Let $\Gamma \cup \{\alpha \}$ be a set of formulas in $For_2$ such that $\Gamma \vDash _{\mathbf{CPL}}\alpha $. Then: 
\[
 \textrm{ if }var(\alpha )\subseteq
var(\Gamma )\textrm{ or }\Gamma \vDash _{\mathbf{CPL}} p_1 \wedge \lnot p_1
 \textrm{ then }\Gamma \vDash _{\mathbf{B}_3}\alpha
\textrm{.} 
\]
\end{theorem}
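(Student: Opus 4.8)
The plan is to argue semantically, directly unfolding the definition of $\vDash_{\mathbf{B}_3}$. So I would fix an arbitrary valuation $v$ of $\mathbf{B}_3$ satisfying $v(\gamma)=1$ for every $\gamma\in\Gamma$, and aim to show $v(\alpha)=1$, invoking the classical hypothesis $\Gamma\vDash_{\mathbf{CPL}}\alpha$ together with whichever disjunct of the proviso is assumed. Since every formula in $For_2$ is $\#$-free, both Proposition~\ref{infec} and Proposition~\ref{classi} are available.

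The crucial first step exploits the fact that $D_{\mathbf{B}_3}=\{1\}$. Because $v(\gamma)=1$, in particular $v(\gamma)\neq\frac12$, so by Proposition~\ref{infec} no propositional variable occurring in $\gamma$ can receive the value $\frac12$. Ranging over all $\gamma\in\Gamma$, this yields $v(p)\in\{1,0\}$ for every $p\in var(\Gamma)$; that is, the restriction of $v$ to $var(\Gamma)$ is already classical. I would then pick a classical valuation $v_{\mathbf{CPL}}$ that agrees with $v$ on every variable of $var(\Gamma)$ (assigning, say, $0$ elsewhere). Applying Proposition~\ref{classi} to each $\gamma\in\Gamma$, whose variables all lie in $var(\Gamma)$, gives $v_{\mathbf{CPL}}(\gamma)=v(\gamma)=1$, so $v_{\mathbf{CPL}}$ classically satisfies all of $\Gamma$.

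Now I would split on the proviso. If $var(\alpha)\subseteq var(\Gamma)$, then every variable of $\alpha$ also receives a classical value under $v$ matching $v_{\mathbf{CPL}}$, so Proposition~\ref{classi} gives $v(\alpha)=v_{\mathbf{CPL}}(\alpha)$; since $v_{\mathbf{CPL}}$ satisfies $\Gamma$ and $\Gamma\vDash_{\mathbf{CPL}}\alpha$, we obtain $v_{\mathbf{CPL}}(\alpha)=1$, whence $v(\alpha)=1$, as required. If instead $\Gamma\vDash_{\mathbf{CPL}}p_1\wedge\lnot p_1$, then $v_{\mathbf{CPL}}(p_1\wedge\lnot p_1)=1$, which is classically impossible; hence no valuation $v$ designating all of $\Gamma$ can exist in the first place, and $\Gamma\vDash_{\mathbf{B}_3}\alpha$ holds vacuously.

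I expect the only genuinely delicate point to be that first step: recognizing that, because $1$ is the \emph{sole} designated value of $\mathbf{B}_3$ and $\frac12$ is infectious, designating the premises forces \emph{all} of their variables to be classical. This is precisely what allows Proposition~\ref{classi} to transfer the whole situation to $\mathbf{CPL}$, and it also explains why the disjunctive proviso is exactly the right hypothesis: were $\alpha$ to contain a variable outside $var(\Gamma)$ while $\Gamma$ remained satisfiable, one could set that fresh variable to $\frac12$ and obtain $v(\alpha)=\frac12\neq 1$, breaking the inference. Everything else reduces to routine applications of Propositions~\ref{infec} and~\ref{classi}.
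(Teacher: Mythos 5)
Your proof is correct, and it is worth noting that the paper itself offers no argument for this theorem: its ``proof'' is only a pointer to the literature (\cite{boch:38,will:94,mal:07,cor:12}), so your write-up is a genuine, self-contained addition rather than a paraphrase. Structurally, what you have produced is the exact dual of the paper's detailed proof of Theorem~\ref{validezH} for $\mathbf{H}_3$, but run in the direct rather than the contrapositive direction. There the authors start from a putative countervaluation with $v(\alpha)=0$, use infectiousness (Proposition~\ref{infec}) to force classical values on $var(\alpha)$, and use the proviso $var(\Gamma)\subseteq var(\alpha)$ plus Proposition~\ref{classi} to transfer the countermodel to \textbf{CPL}; the second disjunct ($\vDash_{\mathbf{CPL}}\alpha$) is then handled by a separate case analysis. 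You instead observe that, since $\frac12$ is \emph{undesignated} in $\mathbf{B}_3$, designating all premises already forces classical values on all of $var(\Gamma)$, so Proposition~\ref{classi} hands you a classical model of $\Gamma$ up front; both disjuncts of the proviso are then discharged within this single frame --- the first by transferring $\Gamma\vDash_{\mathbf{CPL}}\alpha$ back along the agreement on $var(\alpha)\subseteq var(\Gamma)$, the second vacuously, since a classical model of $\Gamma$ would have to satisfy $p_1\wedge\lnot p_1$. This uniformity is a small but real gain over the dualized contrapositive argument, where the tautology/contradiction disjunct needs its own valuation analysis. Your closing remark about tightness --- a fresh variable of $\alpha$ set to $\frac12$ breaks the inference whenever $\Gamma$ is classically satisfiable --- is also exactly the observation the paper makes only informally in the discussion following Theorem~\ref{validezH}.
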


\begin{proof}
Cf. \cite{boch:38,will:94, mal:07, cor:12}.
\end{proof}

\begin{theorem}
\label{validezH} Let $\Gamma \cup \{\alpha \}$ be a set of formulas in $For_1$ such that 
$\Gamma \vDash _{\mathbf{CPL}}\alpha $. Then: 
\[
\textrm{ if }var(\Gamma )\subseteq
var(\alpha )\textrm{ or }\vDash _{\mathbf{CPL}}\alpha 
 \textrm{ then }
\Gamma \vDash _{\mathbf{H}_3}\alpha 
\textrm{.} 
\]
\end{theorem}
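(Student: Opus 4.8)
The plan is to argue directly from the semantics of $\mathbf{H}_3$, exploiting the two structural facts already isolated: the infectiousness of $\frac12$ (Proposition~\ref{infec}) and the classical behaviour of valuations on $\frac12$-free variables (Proposition~\ref{classi}). First I would fix an arbitrary $\mathbf{H}_3$-valuation $v$ and assume $v(\gamma) \in D_{\mathbf{H}_3} = \{1,\frac12\}$ for every $\gamma \in \Gamma$; the goal is then to show $v(\alpha) \in \{1,\frac12\}$. Since we work in $For_1$, where $\wedge$ and $\to$ are abbreviations built from $\lnot$ and $\vee$, both propositions apply to all the formulas involved.

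The first step is a case split on the value $v(\alpha)$. If $v(\alpha) = \frac12$ there is nothing to prove, as $\frac12$ is designated in $\mathbf{H}_3$. So assume $v(\alpha) \in \{1,0\}$. By infectiousness (Proposition~\ref{infec}) this forces $v(p) \in \{1,0\}$ for every $p \in var(\alpha)$: no variable occurring in $\alpha$ can carry the value $\frac12$. It then remains to upgrade ``$v(\alpha) \in \{1,0\}$'' to ``$v(\alpha) = 1$'', and this is exactly where the disjunctive hypothesis enters.

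In the case $\vDash_{\mathbf{CPL}}\alpha$, I would pick any classical valuation $v_{\mathbf{CPL}}$ agreeing with $v$ on $var(\alpha)$ (possible since $v$ is already classical there). Proposition~\ref{classi} gives $v(\alpha) = v_{\mathbf{CPL}}(\alpha)$, and as $\alpha$ is a classical tautology this value is $1$; hence $v(\alpha) = 1$ is designated, without even using $\Gamma$. In the case $var(\Gamma) \subseteq var(\alpha)$, every variable of $\Gamma$ also lies in $var(\alpha)$ and so is classical under $v$; I would choose a classical valuation $v_{\mathbf{CPL}}$ agreeing with $v$ on $var(\alpha) \supseteq var(\Gamma)$. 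By Proposition~\ref{classi}, $v(\gamma) = v_{\mathbf{CPL}}(\gamma) \in \{1,0\}$ for each $\gamma \in \Gamma$, and combining this with the assumption $v(\gamma) \in \{1,\frac12\}$ yields $v_{\mathbf{CPL}}(\gamma) = 1$. Thus $v_{\mathbf{CPL}}$ satisfies all premises, so $\Gamma \vDash_{\mathbf{CPL}}\alpha$ forces $v_{\mathbf{CPL}}(\alpha) = 1$, and one further application of Proposition~\ref{classi} gives $v(\alpha) = v_{\mathbf{CPL}}(\alpha) = 1$, as required.

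The argument is essentially routine once the right classical valuation is reconstructed; the only delicate point, and the heart of the matter, is that the inclusion $var(\Gamma) \subseteq var(\alpha)$ is precisely what guarantees that a classical valuation recovered from $v$ on $var(\alpha)$ already decides every premise in $\Gamma$, so that classical validity can legitimately be invoked. This is the exact dual of the condition $var(\alpha) \subseteq var(\Gamma)$ in Theorem~\ref{validezB}, reflecting the switch from $D_{\mathbf{B}_3} = \{1\}$ to $D_{\mathbf{H}_3} = \{1,\frac12\}$: there the ``bad'' case to rule out is a premise slipping to $\frac12$, here it is the conclusion, which is why premises and conclusion exchange roles in the variable hypothesis.
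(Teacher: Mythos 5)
Your proposal is correct and follows essentially the same route as the paper's own proof: both hinge on Proposition~\ref{infec} to force all variables of $\alpha$ (and hence, under $var(\Gamma)\subseteq var(\alpha)$, of $\Gamma$) to be classical, and on Proposition~\ref{classi} to transfer to a reconstructed classical valuation, with the tautology case handled separately without using $\Gamma$. The only difference is presentational: you argue directly from an arbitrary valuation satisfying the premises, while the paper runs the same computation by contradiction from a putative counter-valuation with $v(\alpha)=0$.
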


\begin{proof}
Assume that $\Gamma \vDash _{\mathbf{CPL}}\alpha $.  If  $\Gamma \nvDash _{\mathbf{H}_3}\alpha $
then there is a valuation $v_{\mathbf{H}_3}$  for $\mathbf{H}_3$ such that  
$v_{\mathbf{H}_3}\left( \Gamma \right) \subseteq \left\{ 1, {\frac12} \right\} $ and $v_{\mathbf{H}_3}\left( \alpha \right) =0$. Suppose that 
$var\left( \Gamma \right) \subseteq var\left( \alpha \right) $. Since 
$v_{\mathbf{H}_3}\left( \alpha \right) =0$ then, by Proposition~\ref{infec}, $v_{\mathbf{H}_3}\left( p\right) \in \left\{ 1,0\right\} $
for every propositional variable $p\in var\left( \alpha \right) $. Thus $v_{\mathbf{H}_3}\left( \Gamma \right) \subseteq \left\{ 1\right\} $. Let $v_{\mathbf{CPL}}$ be a classical valuation such that 
$v_{\mathbf{CPL}}\left( p\right) =v_{\mathbf{H}_3}\left( p\right) $ for every $p\in var\left( \alpha \right) =var\left( \alpha \right) \cup var\left(
\Gamma \right) $. Then, by Proposition~\ref{classi}, $v_{\mathbf{CPL}}\left(
\Gamma \right) \subseteq \left\{ 1\right\} $ but $v_{\mathbf{CPL}}\left(
\alpha \right) =0$, a contradiction. Then, $var\left( \Gamma \right) \nsubseteq var\left( \alpha
\right) $. Thus, if  $var\left( \Gamma \right) \subseteq var\left( \alpha
\right) $ then $\Gamma \vDash _{\mathbf{H}_3}\alpha $.

Finally,  if $\alpha$ is a classical tautology, let $v_{\mathbf{H}_3}$ be a valuation  for $\mathbf{H}_3$. If  $v_{\mathbf{H}_3}\left( p \right)={\frac12}$ for some $p \in var\left( \alpha \right) $ then  $v_{\mathbf{H}_3}\left( \alpha \right)={\frac12}$, by  Proposition~\ref{infec}. On the other hand, if 
$v_{\mathbf{H}_3}\left(var\left( \alpha \right) \right) \subseteq\{0,1\}$ then, by Proposition~\ref{classi},
$v_{\mathbf{H}_3}\left( \alpha \right)=1$. Then, $\vDash _{\mathbf{H}_3}\alpha$ and so $\Gamma \vDash _{\mathbf{H}_3}\alpha $.

\end{proof}

\bigskip So, by Theorem~\ref{validezB}, we have that if a valid classical
inference $\Gamma \vDash \alpha $ is invalid in Bochvar's nonsense
logic then $\Gamma $ is a consistent set of formulas of \textbf{CPL} such that $
var(\alpha )\varsubsetneq var(\Gamma )$. On the other hand, Theorem~\ref{validezH}
expresses that if a valid classical inference $\Gamma \vDash \alpha $ is invalid in
Halld\'en's nonsense logic then $\alpha $ is not a tautological formula in 
\textbf{CPL} and $var(\Gamma )\varsubsetneq var(\alpha )$. Therefore, it is
clear that $\vDash _{\mathbf{H}_3}\alpha $ but $\nvDash _{\mathbf{B}_3}\alpha$, for every $\alpha $ such that $\vDash_{\mathbf{CPL}}\alpha $.

By Theorems \ref{validezB} and \ref{validezH} we obtain a sufficient condition in order to determine whether a valid classical inference is also valid in both \textbf{B}$_{3}$ and 
\textbf{H}$_{3}$.

\begin{corollary}
Let $\Gamma \cup \{\alpha \}$ be a set of formulas in $For_0$ such that $\Gamma \vDash _{\mathbf{CPL}}\alpha $. Then: \footnote{Obviously we are identifying here a primitive connective of $\Sigma_0$ with its abbreviation in $\Sigma_i$, for $i=1,2$.} 
\[
\textrm{if }var\left( \Gamma \right) =var\left( \alpha \right) \textrm{, then }
\Gamma \vDash _{\mathbf{B}_3}\alpha \textrm{ and }\Gamma \vDash _{\mathbf{H}_3}
\alpha \textrm{.} 
\]
\end{corollary}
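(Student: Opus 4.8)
The plan is to derive this directly from Theorems~\ref{validezB} and~\ref{validezH}, treating the two conclusions $\Gamma \vDash_{\mathbf{B}_3}\alpha$ and $\Gamma \vDash_{\mathbf{H}_3}\alpha$ separately. The single hypothesis $var(\Gamma)=var(\alpha)$ splits into the two inclusions $var(\alpha)\subseteq var(\Gamma)$ and $var(\Gamma)\subseteq var(\alpha)$, and each inclusion is exactly the variable-containment side condition that one of the two theorems requires. So once the formulas are read in the appropriate fragment, the corollary is immediate.

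First I would address the mismatch of signatures flagged in the footnote: Theorem~\ref{validezB} is stated for $For_2$ and Theorem~\ref{validezH} for $For_1$, whereas $\Gamma\cup\{\alpha\}\subseteq For_0$. Using the abbreviations fixed in Section~\ref{prelim}, I would rewrite each formula of $For_0$ as a formula of $For_2$ (replacing $\beta\vee\gamma$ by $\lnot(\lnot\beta\wedge\lnot\gamma)$ and $\beta\to\gamma$ by $\lnot(\beta\wedge\lnot\gamma)$) and, dually, as a formula of $For_1$. The only facts needed here are that these abbreviations leave the set $var(\cdot)$ unchanged and preserve the truth-value of every formula under both classical valuations and $\mathbf{B}_3$- and $\mathbf{H}_3$-valuations, since the defining formulas coincide with the corresponding columns of the truth-tables. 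Consequently the hypotheses $\Gamma\vDash_{\mathbf{CPL}}\alpha$ and $var(\Gamma)=var(\alpha)$ transfer unchanged to the $For_2$ and $For_1$ readings, and the conclusions obtained in those fragments transfer back to $For_0$.

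With this in place the two halves follow by a single application of each theorem. For Bochvar's logic, from $var(\Gamma)=var(\alpha)$ I read off $var(\alpha)\subseteq var(\Gamma)$, which is the first disjunct in the proviso of Theorem~\ref{validezB}; together with $\Gamma\vDash_{\mathbf{CPL}}\alpha$ this yields $\Gamma\vDash_{\mathbf{B}_3}\alpha$. For Halld\'en's logic, the same equality gives $var(\Gamma)\subseteq var(\alpha)$, the first disjunct in the proviso of Theorem~\ref{validezH}, and hence $\Gamma\vDash_{\mathbf{H}_3}\alpha$. Note that the second disjuncts ($\Gamma\vDash_{\mathbf{CPL}}p_1\wedge\lnot p_1$ and $\vDash_{\mathbf{CPL}}\alpha$) are never needed: the variable-equality hypothesis is strong enough to trigger the containment disjunct on both sides at once, which is exactly the symmetry the corollary records.

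Since the real content is already carried by the two theorems, there is essentially no hard step. The only point requiring care is the routine verification that the $For_0\!\to\!For_i$ abbreviations preserve both $var$ and the relevant consequence relations, so that the side conditions and conclusions may be moved freely between the signatures $\Sigma_0$, $\Sigma_1$ and $\Sigma_2$.
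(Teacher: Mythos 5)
Your proof is correct and follows essentially the same route as the paper: the corollary is stated there as an immediate consequence of Theorems~\ref{validezB} and~\ref{validezH}, with the equality $var(\Gamma)=var(\alpha)$ split into the two inclusions that trigger the first disjunct of each theorem's proviso, and with the cross-signature identification handled exactly by the footnote's convention. Your additional verification that the $\Sigma_0\to\Sigma_i$ abbreviations preserve $var(\cdot)$ and the truth-values under all three semantics is just an explicit spelling-out of what the paper declares ``obvious'' in that footnote.
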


\bigskip We will introduce cut-free sequent calculi for the  $\{\lnot,\vee\}$-fragment of \textbf{H}$_{3}$ and for the $\{\lnot,\wedge\}$-fragment of \textbf{B}$_{3}$, where $\wedge$ and $\to$ ($\vee$ and $\to$, respectively) are derived connectives. 
The strategy adopted is to modify the classical sequent rules for
classical connectives by adding suitable provisos. As we shall see, the provisos are applied to symmetrical rules: in the fragment of  Halld\'en's logic, the provisos apply  to the introduction rules for conjunction, implication and negation on the left side of the sequent while, in the case of Bochvar's logic, 
the proviso applies  to the introduction rules for disjunction, implication and negation on the  the right side. This reflects the relationship between  these logics and classical logic, as depicted in theorems~\ref{validezB} and~\ref{validezH}.

\bigskip

\section{Sequent calculus \textbf{H} for the $\{\lnot,\vee\}$-fragment of Halld\'en's logic \textbf{H}$_{3}$}

As suggested by Theorem~\ref{validezH},
certain proofs in \textbf{C} should be blocked in any sequent calculus for \textbf{H}$_{3}$.
We present now a cut-free sequent calculus \textbf{H} for the fragment of \textbf{H}$_{3}$ over $\Sigma_1$ by adding provisos on the application of  (classical) rules such that the
construction of complex formulas in the antecedent of the sequents is
blocked in some cases. By symmetry, a sequent calculus \textbf{B} for \textbf{B}$_{3}$ will be also introduced by adding
provisos on the application of (classical) rules such that the
construction of complex formulas in the succedents of the sequents is
blocked under certain circumstances.  

\begin{definition}
The sequent calculus \textbf{H} is obtained from the $\{\lnot,\vee\}$-fragment of
 \textbf{C}  by replacing the rule $\lnot \Rightarrow$ by the following one:

\begin{center}
\[
\lnot ^{H}\Rightarrow \textrm{ }%
\begin{tabular}{l}
$\Gamma \Rightarrow \Delta ,\alpha $ \\ \hline
$\lnot \alpha ,\Gamma \Rightarrow \Delta $%
\end{tabular}%
\hspace{0.5cm} \mbox{provided that } \
var(\alpha) \subseteq var(\Delta)
\]
\end{center}
\end{definition}

\

\begin{proposition}
The following  rules are derivable in \textbf{H}: 
\[
\wedge ^{H}\Rightarrow \textrm{ }%
\begin{tabular}{l}
$\alpha _{1},\alpha _{2},\Gamma \Rightarrow \Delta $ \\ \hline
$\alpha _{1}\wedge \alpha _{2},\Gamma \Rightarrow \Delta _{{}}$%
\end{tabular}%
\medskip \hspace{1cm}\Rightarrow \wedge \textrm{ }%
\begin{tabular}{cc}
$\Gamma \Rightarrow \Delta ,\alpha _{1}$ & $\Gamma \Rightarrow \Delta
,\alpha _{2}$ \\ \hline
\multicolumn{2}{c}{$\Gamma \Rightarrow \Delta ,\alpha _{1}\wedge \alpha _{2}$%
}%
\end{tabular}%
\medskip 
\]
with the following proviso: $var(\{\alpha _{1},\alpha _{2}\}) \subseteq var(\Delta)$ in $\wedge ^{H}\mathbf{\Rightarrow }$.
\end{proposition}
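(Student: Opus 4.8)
The plan is to work entirely inside the primitive language of $\mathbf{H}$. Since in $For_1$ the conjunction is only the abbreviation $\alpha_1\wedge\alpha_2=\lnot(\lnot\alpha_1\vee\lnot\alpha_2)$, I would obtain each of the two rules as a short derivation built exclusively from the rules actually present in $\mathbf{H}$, namely $\Rightarrow\lnot$, $\lnot^{H}\Rightarrow$, $\vee\Rightarrow$, $\Rightarrow\vee$ and the structural rules. The two derivations turn out to be mirror images of one another, and making this symmetry explicit is a good way to organise the argument.

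For $\wedge^{H}\Rightarrow$ I would start from the premise $\alpha_1,\alpha_2,\Gamma\Rightarrow\Delta$ and transfer both conjuncts to the right-hand side. Two applications of $\Rightarrow\lnot$ (first on $\alpha_1$, then on $\alpha_2$) give $\Gamma\Rightarrow\Delta,\lnot\alpha_1,\lnot\alpha_2$; one application of $\Rightarrow\vee$ yields $\Gamma\Rightarrow\Delta,\lnot\alpha_1\vee\lnot\alpha_2$; and a final application of $\lnot^{H}\Rightarrow$ reintroduces the outer negation on the left, producing $\lnot(\lnot\alpha_1\vee\lnot\alpha_2),\Gamma\Rightarrow\Delta$, i.e. $\alpha_1\wedge\alpha_2,\Gamma\Rightarrow\Delta$. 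Every step is unconditional except the last, whose side condition is $var(\lnot\alpha_1\vee\lnot\alpha_2)\subseteq var(\Delta)$; since $var(\lnot\alpha_1\vee\lnot\alpha_2)=var(\{\alpha_1,\alpha_2\})$, this is exactly the proviso stated in the proposition. So the proviso is not ad hoc: it is precisely the condition that legalises the single $\lnot^{H}\Rightarrow$ step, and this is what I would stress.

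For $\Rightarrow\wedge$ I would run the dual construction, with the roles of $\Rightarrow\vee/\vee\Rightarrow$ and of $\Rightarrow\lnot/\lnot^{H}\Rightarrow$ swapped: apply $\lnot^{H}\Rightarrow$ to each premise $\Gamma\Rightarrow\Delta,\alpha_i$ to obtain $\lnot\alpha_i,\Gamma\Rightarrow\Delta$, merge the two conclusions by $\vee\Rightarrow$ into $\lnot\alpha_1\vee\lnot\alpha_2,\Gamma\Rightarrow\Delta$, and close with $\Rightarrow\lnot$, reaching $\Gamma\Rightarrow\Delta,\lnot(\lnot\alpha_1\vee\lnot\alpha_2)=\Gamma\Rightarrow\Delta,\alpha_1\wedge\alpha_2$.

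The step I expect to be the main obstacle is the bookkeeping of the side conditions attached to the $\lnot^{H}\Rightarrow$ applications, since $\lnot^{H}\Rightarrow$ is the only restricted rule of $\mathbf{H}$. For $\wedge^{H}\Rightarrow$ there is a single such application and, as noted, its condition coincides with the stated proviso; here the work is only to confirm that no further occurrence of $\lnot^{H}\Rightarrow$ is hidden in the derivation and that every other step is genuinely unrestricted. For $\Rightarrow\wedge$ there are two applications of $\lnot^{H}\Rightarrow$, carrying the conditions $var(\alpha_1)\subseteq var(\Delta)$ and $var(\alpha_2)\subseteq var(\Delta)$; the delicate point is to check that these are available, i.e. to pin down exactly which side condition (if any) the right introduction rule really needs. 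This is precisely the place where the asymmetry between the restricted left rules and the unrestricted right rules of $\mathbf{H}$ has to be examined carefully, and it reflects, at the level of the derived $\wedge$-rules, the semantic asymmetry recorded in Theorem~\ref{validezH}.
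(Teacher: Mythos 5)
Your derivation of $\wedge^{H}\Rightarrow$ is exactly the paper's: two applications of $\Rightarrow\lnot$, one of $\Rightarrow\vee$, and one final $\lnot^{H}\Rightarrow$ whose side condition $var(\lnot\alpha_1\vee\lnot\alpha_2)\subseteq var(\Delta)$ coincides with the stated proviso. That half is correct and complete.

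The second half has a genuine gap. The proposition asserts that $\Rightarrow\wedge$ is derivable \emph{without any proviso}, but your proposed derivation applies $\lnot^{H}\Rightarrow$ directly to the premises $\Gamma\Rightarrow\Delta,\alpha_i$, which requires $var(\alpha_1)\subseteq var(\Delta)$ and $var(\alpha_2)\subseteq var(\Delta)$. These conditions are simply not available in general (take $\Delta=\emptyset$, or any $\Delta$ whose variables miss those of the $\alpha_i$), so your derivation only establishes a restricted version of $\Rightarrow\wedge$, strictly weaker than the claim. You do flag this as ``the delicate point'' and say it ``has to be examined carefully,'' but you never resolve it --- and as written it cannot be resolved, because the route you chose genuinely needs those conditions. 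The missing idea, which is the paper's trick, is to weaken first: from $\Gamma\Rightarrow\Delta,\alpha_i$ use $\Rightarrow W$ to get $\Gamma\Rightarrow\Delta,\lnot(\lnot\alpha_1\vee\lnot\alpha_2),\alpha_i$, and only then apply $\lnot^{H}\Rightarrow$. The proviso for that step asks $var(\alpha_i)\subseteq var\bigl(\Delta\cup\{\lnot(\lnot\alpha_1\vee\lnot\alpha_2)\}\bigr)$, which holds automatically since $var(\alpha_i)\subseteq var(\lnot\alpha_1\vee\lnot\alpha_2)$ --- no assumption on $\Delta$ is needed. One then merges the two branches by $\vee\Rightarrow$ into $\lnot\alpha_1\vee\lnot\alpha_2,\Gamma\Rightarrow\Delta,\lnot(\lnot\alpha_1\vee\lnot\alpha_2)$ and closes with $\Rightarrow\lnot$; since sequents are built from \emph{sets}, the duplicated succedent formula collapses, yielding $\Gamma\Rightarrow\Delta,\alpha_1\wedge\alpha_2$ unconditionally. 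In short: the restricted rule carries its own target formula into the succedent as a bootstrap to satisfy its own proviso, and without this device the unrestricted $\Rightarrow\wedge$ does not come out of your construction.
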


\begin{proof}
Assume that $var(\{\alpha _{1},\alpha _{2}\}) \subseteq var(\Delta)$. Then 
$var(\lnot \alpha _{1}\vee \lnot\alpha _{2}) \subseteq var(\Delta)$ and so the following derivation can be done in  \textbf{H}: 
\[
\begin{tabular}{cc}
$\alpha _{1},\alpha _{2},\Gamma \Rightarrow \Delta$ & \\ \cline{1-1}
$\Gamma \Rightarrow \Delta ,\lnot
\alpha _{1},\lnot \alpha _{2} $ & (by $\Rightarrow  \lnot$) \\[2mm] 
\cline{1-1}
$\Gamma \Rightarrow \Delta ,\lnot
\alpha _{1}\vee \lnot \alpha _{2}$ & (by $\Rightarrow \vee$) \\[2mm] \cline{1-1}
$\lnot \left( \lnot \alpha _{1}\vee \lnot
\alpha _{2}\right),\Gamma \Rightarrow \Delta $ & (by $\lnot
^{H}\Rightarrow $)%
\end{tabular}%
\]

In order to obtain $\Rightarrow \wedge$, the following derivation can be done  in  \textbf{H}:%
\[
\displaystyle \frac{
 \displaystyle \frac{ \Gamma \Rightarrow \Delta,\alpha _{1}}{\displaystyle \frac{\Gamma \Rightarrow \Delta ,\lnot \left( \lnot \alpha _{1}\vee \lnot
\alpha _{2}\right), \alpha _{1}}{\lnot \alpha _{1},\Gamma \Rightarrow \Delta ,\lnot \left( \lnot \alpha _{1}\vee \lnot
\alpha _{2}\right)} \ \lnot ^{H}\Rightarrow} \Rightarrow W \hspace{1cm} 
\displaystyle \frac{\Gamma \Rightarrow \Delta, \alpha _{2}}{\displaystyle \frac{\Gamma \Rightarrow \Delta ,\lnot \left( \lnot \alpha _{1}\vee \lnot
\alpha _{2}\right), \alpha _{2}}{\lnot \alpha _{2},\Gamma \Rightarrow \Delta ,\lnot \left( \lnot \alpha _{1}\vee \lnot
\alpha _{2}\right)} \ \lnot ^{H}\Rightarrow} \Rightarrow W }
{\displaystyle \frac{\lnot \alpha _{1}\vee\lnot \alpha _{2},\Gamma \Rightarrow \Delta ,\lnot \left( \lnot \alpha _{1}\vee \lnot
\alpha _{2}\right)}{\Gamma \Rightarrow \Delta ,\lnot \left( \lnot \alpha _{1}\vee \lnot
\alpha _{2}\right)} \ \Rightarrow\lnot}\vee\Rightarrow 
\]%
\end{proof}

\begin{proposition}
The following  implicational rules are derivable in \textbf{H}: 
\[
\rightarrow ^{H}\mathbf{\Rightarrow }\textrm{ }%
\begin{tabular}{cc}
$\Gamma \Rightarrow \Delta ,\alpha _{1}$ & $\alpha _{2},\Gamma \Rightarrow
\Delta $ \\ \hline
\multicolumn{2}{c}{$\alpha _{1}\rightarrow \alpha _{2},\Gamma \Rightarrow
\Delta $}%
\end{tabular}%
\hspace{1cm}\mathbf{\Rightarrow }\rightarrow \textrm{ }%
\begin{tabular}{l}
$\alpha _{1},\Gamma \Rightarrow \Delta ,\alpha _{2}$ \\ \hline
$\Gamma \Rightarrow \Delta ,\alpha _{1}\rightarrow \alpha _{2}$%
\end{tabular}%
\]
with the following proviso: $var(\{\alpha _{1},\alpha _{2}\}) \subseteq var(\Delta)$ in $\rightarrow ^{H}\mathbf{\Rightarrow }$.
\end{proposition}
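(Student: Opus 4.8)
The plan is to reduce both rules to the primitive rules of \textbf{H} by unfolding the definition of implication. Recall that in $For_1$ one has $\alpha_1 \to \alpha_2 =_{def} \lnot\alpha_1 \vee \alpha_2$, so that each target sequent is, after unfolding, a sequent whose principal formula is a disjunction. I therefore expect to obtain $\Rightarrow\rightarrow$ by combining the (unrestricted) rules $\Rightarrow\lnot$ and $\Rightarrow\vee$, and $\rightarrow^{H}\Rightarrow$ by combining $\lnot^{H}\Rightarrow$ and $\vee\Rightarrow$, exactly in the spirit of the preceding proposition for conjunction.

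For $\Rightarrow\rightarrow$ no proviso is needed, since both $\Rightarrow\lnot$ and $\Rightarrow\vee$ are unrestricted in \textbf{H}. Starting from the premise $\alpha_1,\Gamma\Rightarrow\Delta,\alpha_2$, I would first apply $\Rightarrow\lnot$ to move $\alpha_1$ to the right as $\lnot\alpha_1$, and then collapse the two right-hand disjuncts with $\Rightarrow\vee$:
\[
\displaystyle \frac{\displaystyle \frac{\alpha _{1},\Gamma \Rightarrow \Delta ,\alpha _{2}}{\Gamma \Rightarrow \Delta ,\alpha _{2},\lnot \alpha _{1}} \ \Rightarrow \lnot}{\Gamma \Rightarrow \Delta ,\lnot \alpha _{1}\vee \alpha _{2}} \ \Rightarrow \vee
\]
The conclusion $\Gamma \Rightarrow \Delta ,\lnot \alpha _{1}\vee \alpha _{2}$ is, by definition, the sequent $\Gamma \Rightarrow \Delta ,\alpha_1 \to \alpha_2$.

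For $\rightarrow^{H}\Rightarrow$ the proviso $var(\{\alpha_1,\alpha_2\})\subseteq var(\Delta)$ is available; in particular $var(\alpha_1)\subseteq var(\Delta)$, which is exactly the side condition required to apply $\lnot^{H}\Rightarrow$ to $\alpha_1$. From the first premise $\Gamma\Rightarrow\Delta,\alpha_1$ I would therefore obtain $\lnot\alpha_1,\Gamma\Rightarrow\Delta$ by $\lnot^{H}\Rightarrow$, and then combine it with the second premise $\alpha_2,\Gamma\Rightarrow\Delta$ by means of $\vee\Rightarrow$:
\[
\displaystyle \frac{\displaystyle \frac{\Gamma \Rightarrow \Delta ,\alpha _{1}}{\lnot \alpha _{1},\Gamma \Rightarrow \Delta} \ \lnot ^{H}\Rightarrow \hspace{1cm} \alpha _{2},\Gamma \Rightarrow \Delta}{\lnot \alpha _{1}\vee \alpha _{2},\Gamma \Rightarrow \Delta} \ \vee \Rightarrow
\]
Again the conclusion $\lnot \alpha _{1}\vee \alpha _{2},\Gamma \Rightarrow \Delta$ is the sequent $\alpha_1\to\alpha_2,\Gamma\Rightarrow\Delta$ by definition.

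I do not expect any genuine obstacle here: both derivations are short and use only rules already present in \textbf{H}. The single point to verify is that the application of $\lnot^{H}\Rightarrow$ respects its variable restriction, and this is immediate from the stated proviso. In fact only the weaker condition $var(\alpha_1)\subseteq var(\Delta)$ is actually used, the disjunct $\alpha_2$ being carried into the antecedent directly through the given premise rather than through a negation-introduction step. This asymmetry---the left disjunct needing the restriction while the right one does not---is precisely what makes $\rightarrow^{H}\Rightarrow$ a restricted rule and $\Rightarrow\rightarrow$ an unrestricted one, matching the pattern of Theorem~\ref{validezH}.
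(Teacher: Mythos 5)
Your proof is correct and follows exactly the route the paper intends: its proof is simply ``straightforward, by considering that $\alpha_1 \rightarrow \alpha_2$ stands for $\lnot\alpha_1 \vee \alpha_2$ in \textbf{H}'', and your derivations spell this out, correctly checking that the only restricted step is the application of $\lnot^{H}\Rightarrow$ to $\alpha_1$, which the stated proviso licenses. Your side remark that only the weaker condition $var(\alpha_1)\subseteq var(\Delta)$ is actually used is accurate, since $\alpha_2$ enters through the unrestricted rule $\vee\Rightarrow$ via the second premise.
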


\begin{proof}
Straightforward, by considering that $\alpha _{1}\rightarrow \alpha _{2}$ stands for $\lnot\alpha _{1}\vee \alpha _{2}$ in \textbf{H}.
\end{proof}

\subsection{Soundness of \textbf{H}}

In this subsection we shall prove the soundness of sequent calculus \textbf{H%
}. Firstly,  some semantical notions will be extended from formulas to
sequents.

\begin{definition}
\label{modelosecuente} Let ${\mathbf{L}}$ be a matrix logic over a signature $\Sigma$. A valuation $v$ of  \textbf{L } is a 
\emph{model of a sequent} $\Gamma \Rightarrow \Delta$ over $\Sigma$ iff, if $v\left( \Gamma \right) \subseteq D_{\mathbf{L}}$, then $v\left(
\delta \right) \in D_{\mathbf{L}}$ for some $\delta \in \Delta $. When $v$ is a model of the sequent $\Gamma \Rightarrow \Delta $, we will
write $v\vDash_{\mathbf{L}} \Gamma \Rightarrow \Delta $.
\end{definition}

\begin{definition}
\label{secuentevalido} A \emph{sequent} $\Gamma \Rightarrow
\Delta $ is \emph{valid} in  \textbf{L} if, for every valuation $v$ of \textbf{L}, $v$ is a model of the sequent $\Gamma \Rightarrow \Delta $. When the sequent is valid, we will write $\vDash _{\mathbf{L}}\Gamma \Rightarrow \Delta $.
\end{definition}

It is worth noting that $\vDash _{\mathbf{L}}\Gamma \Rightarrow \alpha $ \ iff \  $\Gamma\vDash _{\mathbf{L}} \alpha $. Additionally,  $\vDash _{\mathbf{CPL}}\Gamma \Rightarrow \Delta $ iff $\Gamma \vDash_{\mathbf{CPL}} \bigvee_{\alpha \in \Delta} \alpha$

\begin{definition}
A sequent rule $\mathfrak{R}$ preserves \emph{validity} in \textbf{L} if, for every instance 
$\displaystyle\frac{\Upsilon}{S}$ of $\mathfrak{R}$ and for every  valuation $v$ of \textbf{L}, if  $v\vDash_{\mathbf{L}} S'$ for every $S' \in \Upsilon$ then $v\vDash_{\mathbf{L}} S$.
\end{definition}

\begin{lemma}
\label{reglasHvalidas} Every sequent rule of the calculus \textbf{H} preserves
validity.
\end{lemma}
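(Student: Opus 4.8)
The plan is to unfold the definition of ``preserving validity'' rule by rule: for each rule I would fix an arbitrary instance $\frac{\Upsilon}{S}$ and an arbitrary valuation $v$ of $\mathbf{H}_3$, assume $v\vDash_{\mathbf{H}_3} S'$ for every premise $S'\in\Upsilon$, and show $v\vDash_{\mathbf{H}_3} S$. Before starting the case analysis I would record three facts about $D_{\mathbf{H}_3}=\{1,\tfrac12\}$ that get used repeatedly: $v(\beta)\in D_{\mathbf{H}_3}$ iff $v(\beta)\neq 0$; from the negation table, $v(\lnot\alpha)\in D_{\mathbf{H}_3}$ iff $v(\alpha)\neq 1$; and from the disjunction table, $v(\alpha_1\vee\alpha_2)\in D_{\mathbf{H}_3}$ iff $v(\alpha_1)\in D_{\mathbf{H}_3}$ or $v(\alpha_2)\in D_{\mathbf{H}_3}$, since the only entry leaving $D_{\mathbf{H}_3}$ is $v(\alpha_1)=v(\alpha_2)=0$.

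The axiom and the structural rules I would dispatch immediately: $\mathrm{Ax}$ holds because $v(\alpha)\in D_{\mathbf{H}_3}$ trivially implies $v(\alpha)\in D_{\mathbf{H}_3}$; $\mathrm{W}\Rightarrow$ and $\Rightarrow\mathrm{W}$ only weaken the antecedent constraint or enlarge the succedent; and for $\mathrm{Cut}$, assuming $v(\Gamma)\subseteq D_{\mathbf{H}_3}$, the left premise gives either a designated $\delta\in\Delta$ (done) or $v(\alpha)\in D_{\mathbf{H}_3}$, whence the right premise applies. The rules $\Rightarrow\vee$ and $\vee\Rightarrow$ follow directly from the disjunction fact, and $\Rightarrow\lnot$ from the negation fact: under $v(\Gamma)\subseteq D_{\mathbf{H}_3}$, if $v(\alpha)\in D_{\mathbf{H}_3}$ the premise supplies a designated $\delta\in\Delta$, while if $v(\alpha)=0$ then $v(\lnot\alpha)=1\in D_{\mathbf{H}_3}$. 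None of these requires a side condition.

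The single case where the proviso is indispensable, and the one I expect to be the crux, is $\lnot^H\Rightarrow$, with premise $\Gamma\Rightarrow\Delta,\alpha$, conclusion $\lnot\alpha,\Gamma\Rightarrow\Delta$, and proviso $var(\alpha)\subseteq var(\Delta)$. I would assume $v$ models the premise and that $v(\{\lnot\alpha\}\cup\Gamma)\subseteq D_{\mathbf{H}_3}$, and then produce a designated formula in $\Delta$. From $v(\lnot\alpha)\in D_{\mathbf{H}_3}$ I get $v(\alpha)\neq 1$, and from $v(\Gamma)\subseteq D_{\mathbf{H}_3}$ the premise yields a designated member of $\Delta\cup\{\alpha\}$; if it lies in $\Delta$ I am finished, so the interesting subcase is when only $\alpha$ is designated, forcing $v(\alpha)\in D_{\mathbf{H}_3}$ and $v(\alpha)\neq 1$, i.e.\ $v(\alpha)=\tfrac12$. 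This is exactly where Proposition~\ref{infec} and the proviso combine: $v(\alpha)=\tfrac12$ gives some $p\in var(\alpha)$ with $v(p)=\tfrac12$; by the proviso $p\in var(\Delta)$, so $p\in var(\delta)$ for some $\delta\in\Delta$, and a second use of Proposition~\ref{infec} gives $v(\delta)=\tfrac12\in D_{\mathbf{H}_3}$, contradicting that no member of $\Delta$ is designated. (Here I rely on all formulas being $\#$-free, which holds since $\mathbf{H}$ lives over $\Sigma_1$.) Collecting the cases finishes the lemma; the infectiousness argument for $\lnot^H\Rightarrow$ is the essential obstacle, and the restriction $var(\alpha)\subseteq var(\Delta)$ is precisely what makes it go through.
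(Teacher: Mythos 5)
Your proof is correct and follows essentially the same route as the paper's: a rule-by-rule verification in which the axiom and structural rules are immediate, the rules for $\vee$ and $\Rightarrow\lnot$ follow from the truth tables, and the crux is $\lnot^{H}\Rightarrow$, handled exactly as in the paper by combining the proviso $var(\alpha)\subseteq var(\Delta)$ with two applications of the infectiousness property (Proposition~\ref{infec}) to propagate $v(\alpha)=\tfrac12$ to some $\delta\in\Delta$. Your only departure is cosmetic: you verify Cut and weakening directly where the paper dispatches them by noting that $\mathbf{H}_3$ is Tarskian, and you phrase the $v(\alpha)=\tfrac12$ subcase as a contradiction rather than a direct derivation of a designated member of $\Delta$.
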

\begin{proof}
Observe that the axiom Ax and the structural rules preserve validity, since they correspond to properties which are valid in every Tarskian logic (and  \textbf{H} is Tarskian since it is a matrix logic).   

\begin{description}

\item[$\Rightarrow \lnot $] Let $v$ be a valuation of \textbf{H}$_{3}$ such that $v\vDash _{\mathbf{H}_3}\alpha ,\Gamma
\Rightarrow \Delta $, and suppose that $v\left( \Gamma \right) \subseteq \left\{ 1,
{\frac12}\right\} $. If $v\left( \lnot \alpha \right) =0$, then $v\left( \alpha \right) =1$. Then, by hypothesis, we infer that $v\left( \delta\right) \in \left\{ 1,
{\frac12}\right\} $, for some $\delta\in \Delta $. If $v\left(
\lnot \alpha \right) \neq 0$, then $v\left( \lnot \alpha
\right) \in \left\{ 1, {\frac12}\right\} $. This shows that $v\vDash _{\mathbf{H}_3}\Gamma \Rightarrow \Delta,\lnot \alpha$.

\item[$\lnot ^{H}\Rightarrow $] Let $v$ be a valuation of \textbf{H}$_{3}$ such that 
$v\vDash_{\mathbf{H}_3}\Gamma
\Rightarrow \Delta ,\alpha $ and assume that $var\left( \alpha \right)
\subseteq var\left( \Delta \right) $. Suppose that $v\left( \lnot \alpha \right) \in \left\{ 1,{\frac12}\right\} $ and $v\left( \Gamma \right) \subseteq \left\{ 1,{\frac12}\right\}$. Then, by hypothesis, $v\left( \delta\right)
\in \left\{ 1,{\frac12}
\right\} $, for some $\delta\in \Delta $, or $v\left(
\alpha \right) \in \left\{ 1,{\frac12}\right\} $. Since $v\left( \lnot \alpha \right) \in \left\{ 1,{\frac12}\right\} $, then $v\left( \alpha \right) \in \left\{ 0,{\frac12}\right\} $. If $v\left( \alpha \right) =0$ then $v\left( \delta\right) \in \left\{ 1,{\frac12}\right\} $, for some $\delta\in \Delta $. And if $v\left(
\alpha \right) ={\frac12}$, then, by Proposition \ref{infecciosidad}, we infer that $v\left( p\right) =
{\frac12}$ for some atomic formula $p\in var\left( \alpha \right) $. Since $var\left( \alpha \right) \subseteq var\left( \Delta\right) $ then $p \in var(\delta)$ for some $\delta\in \Delta $ and so, again by Proposition \ref{infecciosidad}, we infer that $v\left( \delta\right) =\left\{ {\frac12}\right\} $. Therefore, we conclude that $v\vDash _{\mathbf{H}_3}\lnot \alpha ,\Gamma \Rightarrow \Delta $.

\item[$\Rightarrow \vee $] Let $v$ be a valuation of \textbf{H}$_{3}$ such that $v\vDash _{\mathbf{H}_3} \Gamma \Rightarrow \Delta, \alpha
_{1},\alpha _{2} $ and assume that $v\left( \Gamma \right) \subseteq \left\{ 1,
{\frac12}\right\} $. If $v\left( \alpha _{1}\right)= v\left( \alpha _{2}\right) =0$ 
then, by hypothesis, we infer that $v\left( \delta
\right) \in \left\{ 1,{\frac12}\right\} $, for some $\delta \in \Delta $. Therefore  $v\vDash _{\mathbf{H}_3} \Gamma \Rightarrow \Delta, \alpha
_{1}\vee \alpha _{2}$. Otherwise, if $v\left( \alpha _{1}\right) \in \left\{ 1,
{\frac12}\right\}$ or  $v\left( \alpha _{2}\right) \in \left\{ 1,
{\frac12}\right\}$ then $v\left( \alpha
_{1}\vee \alpha _{2}\right) \in \left\{ 1,
{\frac12}\right\}$ and so  $v\vDash _{\mathbf{H}_3} \Gamma \Rightarrow \Delta, \alpha
_{1}\vee \alpha _{2}$.

\item[$\vee\Rightarrow $] Let $v$ be a valuation of \textbf{H}$_{3}$ such that $v\vDash _{\mathbf{H}_3}\alpha _{1}, \Gamma \Rightarrow
\Delta$ and $v\vDash _{\mathbf{H}_3}\alpha _{2}, \Gamma \Rightarrow \Delta$. Suppose
that $v\left(  \alpha_{1}\vee \alpha _{2}\right) \in \left\{ 1,
{\frac12}\right\}$ and 
$v\left( \Gamma \right) \subseteq \left\{ 1,{\frac12}\right\} $. Then, either  $v\left( \alpha _{1}\right) \in \left\{ 1,
{\frac12}\right\}$ or  $v\left( \alpha _{2}\right) \in \left\{ 1,
{\frac12}\right\}$.
By hypothesis, it follows that $v\left( \delta
\right) \in \left\{ 1,{\frac12}\right\} $, for some $\delta \in \Delta $ and so  $v\vDash _{\mathbf{H}_3} \alpha
_{1}\vee \alpha _{2}, \Gamma \Rightarrow \Delta$.
\end{description}
\end{proof}

\begin{theorem} [Soundness of  \textbf{H}]
\label{correccionseqH} Let $\Gamma \cup \Delta $ be a set of formulas in $For_1$. Then: if $\Gamma \Rightarrow \Delta$ is provable in \textbf{H}  then $\vDash _{\mathbf{H}_3}\Gamma \Rightarrow \Delta$. In particular, if $\Gamma \Rightarrow \alpha$ 
is provable in \textbf{H}  then $\Gamma \vDash _{\mathbf{H}_3}\alpha$.
\end{theorem}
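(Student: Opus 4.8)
The plan is to prove soundness by induction on the height (equivalently, the number of rule applications) of a derivation of $\Gamma \Rightarrow \Delta$ in \textbf{H}, using Lemma~\ref{reglasHvalidas} as the engine of the inductive step. The statement to establish is that provability in \textbf{H} implies validity in $\mathbf{H}_3$ in the sense of Definition~\ref{secuentevalido}, i.e. that every valuation $v$ of $\mathbf{H}_3$ is a model of $\Gamma \Rightarrow \Delta$ in the sense of Definition~\ref{modelosecuente}. Since all the connective-specific semantical work has already been carried out in Lemma~\ref{reglasHvalidas}, the remaining argument is a routine structural induction with essentially no new content beyond the base case.

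For the base case, the derivation consists of a single axiom instance $\alpha \Rightarrow \alpha$. I would check directly that this sequent is valid: for an arbitrary valuation $v$ of $\mathbf{H}_3$, if $v(\alpha) \in D_{\mathbf{H}_3} = \{1,{\frac12}\}$ then trivially $v(\alpha) \in D_{\mathbf{H}_3}$, so $v$ is a model of $\alpha \Rightarrow \alpha$. Hence $\vDash_{\mathbf{H}_3} \alpha \Rightarrow \alpha$.

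For the inductive step, suppose the last rule applied in the derivation of $S = (\Gamma \Rightarrow \Delta)$ is some rule $\mathfrak{R}$ of \textbf{H} with premises forming the set $\Upsilon$. Each premise $S' \in \Upsilon$ has a shorter derivation in \textbf{H}, so by the induction hypothesis $\vDash_{\mathbf{H}_3} S'$, which by Definition~\ref{secuentevalido} means $v \vDash_{\mathbf{H}_3} S'$ for every valuation $v$ of $\mathbf{H}_3$ and every $S' \in \Upsilon$. Fixing an arbitrary $v$, I would then invoke Lemma~\ref{reglasHvalidas}, which guarantees that $\mathfrak{R}$ preserves validity: from $v \vDash_{\mathbf{H}_3} S'$ for all $S' \in \Upsilon$ we conclude $v \vDash_{\mathbf{H}_3} S$. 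As $v$ was arbitrary, $\vDash_{\mathbf{H}_3} S$, completing the induction.

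The final ``in particular'' clause follows immediately from the observation recorded just before Definition~\ref{modelosecuente}, namely that $\vDash_{\mathbf{H}_3} \Gamma \Rightarrow \alpha$ iff $\Gamma \vDash_{\mathbf{H}_3} \alpha$; applying this to the sequent $\Gamma \Rightarrow \alpha$ yields the stated consequence. I do not expect any genuine obstacle here: the only point requiring care is bookkeeping in the inductive step, ensuring that the universally quantified ``for every valuation $v$'' in the definition of validity is handled correctly so that Lemma~\ref{reglasHvalidas} can be applied pointwise in $v$ before closing the quantifier.
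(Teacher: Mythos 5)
Your proposal is correct and follows exactly the paper's own argument: the paper likewise proves soundness by induction on the depth of the derivation, verifying validity of the axiom $\alpha \Rightarrow \alpha$ as base case and invoking Lemma~\ref{reglasHvalidas} for the inductive step. Your additional care with the per-valuation (pointwise) reading of ``preserves validity'' and the explicit appeal to the remark before Definition~\ref{modelosecuente} for the ``in particular'' clause merely spell out details the paper leaves implicit.
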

\begin{proof}
If the sequent $\Gamma \Rightarrow \Delta $ is an instance of axiom Ax, then $\Gamma \Rightarrow \Delta$ is valid in \textbf{H}$_3$. By induction on the depth of a derivation of
 $\Gamma \Rightarrow \Delta $ in  \textbf{H} it follows, by the previous Lemma~\ref{reglasHvalidas}, that  the sequent $\Gamma \Rightarrow \Delta $ is valid in \textbf{H}$_3$.
\end{proof}

\begin{proposition}
Let $\Gamma $ be a nonempty set of formulas in $For_1$. Then the sequent $\Gamma \Rightarrow $ \ 
is not provable in~\textbf{H}.
\end{proposition}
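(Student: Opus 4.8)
The plan is to argue semantically, leaning on the Soundness theorem just proved (Theorem~\ref{correccionseqH}): it suffices to exhibit a single valuation of $\mathbf{H}_3$ witnessing that the sequent $\Gamma \Rightarrow$ is \emph{not} valid, because every provable sequent must be valid. Concretely, if I can show $\nvDash_{\mathbf{H}_3} \Gamma \Rightarrow$, then by the contrapositive of Soundness the sequent cannot be provable in $\mathbf{H}$.

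The valuation I would use is the ``all nonsensical'' one, namely $v$ with $v(p) = \frac12$ for every propositional variable $p \in prop$. The key point is that every formula of $For_1$ contains at least one propositional variable, since $\Sigma_1$ has no $0$-ary connective and formulas are generated from variables by $\lnot$ and $\vee$ alone. Hence Proposition~\ref{infec} applies to every $\gamma \in For_1$ and yields $v(\gamma) = \frac12$; in particular $v(\gamma) = \frac12$ for each $\gamma \in \Gamma$.

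Next I would verify, against Definition~\ref{modelosecuente}, that $v$ is not a model of $\Gamma \Rightarrow$. Since $\frac12 \in D_{\mathbf{H}_3} = \{1, \frac12\}$, the antecedent condition $v(\Gamma) \subseteq D_{\mathbf{H}_3}$ is satisfied; but the succedent is empty, so there is no $\delta$ with $v(\delta) \in D_{\mathbf{H}_3}$. Thus $v \nvDash_{\mathbf{H}_3} \Gamma \Rightarrow$, so $\Gamma \Rightarrow$ is not valid in $\mathbf{H}_3$, and Soundness finishes the argument. (The hypothesis that $\Gamma$ is nonempty is what makes $\Gamma \Rightarrow$ a legitimate sequent in the first place; the designatedness of $\frac12$ in $\mathbf{H}_3$ is precisely what lets the all-nonsensical valuation satisfy the antecedent and thereby defeat an empty succedent.)

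I do not expect a genuine obstacle; the only step deserving a word of care is the observation that $For_1$ contains no variable-free formula, which is exactly what lets the single valuation $v$ propagate the value $\frac12$ uniformly through Proposition~\ref{infec}. For completeness I note that a purely syntactic proof is also available, by induction on the depth of a derivation, showing that every provable sequent of $\mathbf{H}$ has a nonempty succedent: the only operational rule that could erase the succedent is $\lnot^{H}\Rightarrow$, and it is blocked precisely because its proviso $var(\alpha) \subseteq var(\Delta)$ cannot hold when $\Delta = \emptyset$, since $var(\alpha) \neq \emptyset$ for every formula $\alpha$; Cut cannot produce an empty succedent either, as it would require a premise already having one, contradicting the induction hypothesis.
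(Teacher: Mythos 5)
Your proof is correct and is essentially the paper's own argument: the paper likewise takes the valuation assigning $\frac12$ to every variable of $\Gamma$, concludes $\nvDash_{\mathbf{H}_3}\Gamma\Rightarrow$ (since $\frac12$ is designated and the succedent is empty), and applies the contrapositive of Theorem~\ref{correccionseqH}. Your explicit appeals to Proposition~\ref{infec} and to the fact that every formula of $For_1$ contains a variable just fill in details the paper leaves implicit, and the closing syntactic induction is a sound but unnecessary alternative.
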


\begin{proof}
 Let $v$ be a 
\textbf{H}$_{3}$-valuation such that $v\left( p\right) ={\frac12}$ for every $p\in var\left( \Gamma \right) $. Then
$v\nvDash _{\mathbf{H}_3}\Gamma \Rightarrow$ and so 
$\nvDash _{\mathbf{H}_3}\Gamma \Rightarrow$~. By contraposition of
Theorem \ref{correccionseqH}, we conclude that the sequent $\Gamma \Rightarrow $ \  
is not provable in \textbf{H}.
\end{proof}

\subsection{Completeness of \textbf{H}}

The following result follows straightforwardly:

\begin{proposition}
\label{CextensionH} Let $\Gamma \cup \Delta $ be a  finite nonempty set of formulas in $For_1$. Then: 
\[
\textrm{ if }\vDash _{\mathbf{H}_3}\Gamma \Rightarrow \Delta \textrm{, then }%
\vDash _{\mathbf{CPL}}\Gamma \Rightarrow \Delta \textrm{.} 
\]
\end{proposition}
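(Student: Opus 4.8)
The plan is to establish the contrapositive: if $\Gamma \Rightarrow \Delta$ is not valid in $\mathbf{CPL}$, then it is not valid in $\mathbf{H}_3$. The underlying idea is that every classical valuation may be regarded as a particular $\mathbf{H}_3$-valuation, namely one whose values on the propositional variables all lie in $\{0,1\}\subseteq V$, and that along such a valuation the two notions of ``model of a sequent'' from Definition~\ref{modelosecuente} coincide. The tool that makes this precise is Proposition~\ref{classi}.

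First I would unfold the failure of $\mathbf{CPL}$-validity. By Definitions~\ref{modelosecuente} and~\ref{secuentevalido}, $\nvDash_{\mathbf{CPL}}\Gamma\Rightarrow\Delta$ means that there is a classical valuation $v_{\mathbf{CPL}}$ which is not a model of the sequent; that is, $v_{\mathbf{CPL}}(\gamma)=1$ for every $\gamma\in\Gamma$ while $v_{\mathbf{CPL}}(\delta)=0$ for every $\delta\in\Delta$ (recall that classical valuations take values in $\{0,1\}$ and that $1$ is the only designated value there).

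Next I would transfer this valuation to $\mathbf{H}_3$. Let $v_{\mathbf{H}_3}$ be the $\mathbf{H}_3$-valuation that agrees with $v_{\mathbf{CPL}}$ on every propositional variable; this is well defined since $\{0,1\}\subseteq V$. Because all variables receive classical values, Proposition~\ref{classi} gives $v_{\mathbf{H}_3}(\alpha)=v_{\mathbf{CPL}}(\alpha)\in\{0,1\}$ for every $\alpha\in For_1$, so in particular the value $\frac12$ never occurs along $v_{\mathbf{H}_3}$. Consequently $v_{\mathbf{H}_3}(\gamma)=1\in D_{\mathbf{H}_3}$ for each $\gamma\in\Gamma$, whereas $v_{\mathbf{H}_3}(\delta)=0\notin D_{\mathbf{H}_3}$ for each $\delta\in\Delta$; hence $v_{\mathbf{H}_3}$ is not a model of $\Gamma\Rightarrow\Delta$ in $\mathbf{H}_3$, which yields $\nvDash_{\mathbf{H}_3}\Gamma\Rightarrow\Delta$, as required.

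I do not expect any genuine obstacle here: the statement amounts to the fact that the classical valuations embed into the $\mathbf{H}_3$-valuations compatibly with the designated sets, and Proposition~\ref{classi} is exactly what makes this embedding respect the semantic value of every complex formula. The single point deserving attention is the compatibility of the designated sets: although $D_{\mathbf{H}_3}=\{1,\frac12\}$ strictly contains the classical designated set $\{1\}$, the two coincide on the classical values $\{0,1\}$, so the extra designated value $\frac12$ is harmless precisely because it never arises along a valuation built from a classical one. This is also why the implication runs from $\mathbf{H}_3$-validity to $\mathbf{CPL}$-validity and not conversely.
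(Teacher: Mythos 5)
Your proof is correct and is essentially the paper's own argument in contrapositive form: both rest on Proposition~\ref{classi}, which lets a classical valuation be regarded as an $\mathbf{H}_3$-valuation taking only values in $\{0,1\}$, where the designated sets $\{1\}$ and $\left\{1,\frac12\right\}$ agree. The paper argues directly (an $\mathbf{H}_3$-valid sequent forces every classical model of $\Gamma$ to designate some $\delta\in\Delta$, which must then equal $1$), while you propagate a classical countermodel to an $\mathbf{H}_3$-countermodel, but the mathematical content is identical.
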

\begin{proof} Assume that $\vDash _{\mathbf{H}_3}\Gamma \Rightarrow \Delta$ and let $v$ be a classical valuation such that $v\left(\Gamma\right) \subseteq \{1\}$. By Proposition~\ref{classi}, $v$ can be seen as a \textbf{H}$_{3}$-valuation such that   $v\left(\Gamma\right) \subseteq \{1,{\frac12}\}$. By
hypothesis, $v\left(\delta\right) \in  \{1,{\frac12}\}$ for some $\delta \in \Delta$. Since $v$ is classical, it follows that $v\left(\delta\right)=1$ for some $\delta \in \Delta$, therefore $\vDash _{\mathbf{CPL}}\Gamma \Rightarrow \Delta$.
\end{proof}

\begin{proposition}
\label{HextensionC} Let $\Gamma \cup \Delta $ be a finite nonempty set of formulas in $For_1$. Then: 
if $\Gamma \Rightarrow \Delta$ is provable in \textbf{H}  then $\Gamma \Rightarrow \Delta$ is provable 
in the $\{\lnot,\vee\}$-fragment of  \textbf{C}. 
\end{proposition}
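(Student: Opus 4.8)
The plan is to observe that the calculus \textbf{H} is nothing but a restriction of the $\{\lnot,\vee\}$-fragment of \textbf{C}: every inference rule of \textbf{H} is already a rule of that fragment, with the sole exception of $\lnot^H\Rightarrow$, which merely carries the extra side condition $var(\alpha)\subseteq var(\Delta)$. Since the unrestricted rule $\lnot\Rightarrow$ of \textbf{C} imposes no such proviso, every legitimate application of $\lnot^H\Rightarrow$ is automatically a legitimate application of $\lnot\Rightarrow$. Consequently a derivation in \textbf{H} can be read almost verbatim as a derivation in the fragment of \textbf{C}, once each $\lnot^H\Rightarrow$ step is reinterpreted as an (unrestricted) $\lnot\Rightarrow$ step.

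I would make this precise by induction on the depth of the given derivation of $\Gamma\Rightarrow\Delta$ in \textbf{H}. In the base case, the axiom Ax of \textbf{H} coincides with the axiom of \textbf{C}, so every axiom instance is immediately provable in the fragment. For the inductive step, consider the last rule $\mathfrak{R}$ applied in the \textbf{H}-derivation and its premises, which by the induction hypothesis are already provable in the $\{\lnot,\vee\}$-fragment of \textbf{C}. If $\mathfrak{R}$ is one of Ax, the structural rules (weakening on either side and Cut), or the operational rules $\Rightarrow\lnot$, $\vee\Rightarrow$, $\Rightarrow\vee$, then $\mathfrak{R}$ is literally a rule of the fragment, so applying it to the (now \textbf{C}-provable) premises yields a \textbf{C}-derivation of the conclusion. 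If $\mathfrak{R}$ is $\lnot^H\Rightarrow$ with premise $\Gamma\Rightarrow\Delta,\alpha$ and conclusion $\lnot\alpha,\Gamma\Rightarrow\Delta$, then the premise is provable in the fragment by the induction hypothesis, and we simply apply the unrestricted rule $\lnot\Rightarrow$ of \textbf{C} to obtain $\lnot\alpha,\Gamma\Rightarrow\Delta$, discarding the proviso.

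I do not anticipate any genuine obstacle here: the statement is the easy half of the relationship between \textbf{H} and classical provability, expressing that \textbf{H} proves no more than \textbf{C} does in this fragment. The only point worth stating carefully is that dropping the side condition $var(\alpha)\subseteq var(\Delta)$ can only enlarge the class of admissible inferences, so replacing each $\lnot^H\Rightarrow$ node by a $\lnot\Rightarrow$ node leaves a well-formed \textbf{C}-derivation tree with exactly the same end-sequent $\Gamma\Rightarrow\Delta$.
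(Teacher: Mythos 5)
Your proof is correct and follows exactly the paper's reasoning: the paper disposes of this proposition in one line by noting that \textbf{H} is a restricted version of the $\{\lnot,\vee\}$-fragment of \textbf{C}, and your induction on derivation depth merely makes that observation explicit. In particular, your key point --- that dropping the proviso on $\lnot^{H}\Rightarrow$ only enlarges the set of admissible applications, so every \textbf{H}-derivation is verbatim a derivation in the fragment of \textbf{C} --- is precisely the content of the paper's remark.
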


\begin{proof}
This is obvious, since  \textbf{H} is a restricted version of the $\{\lnot,\vee\}$-fragment of  \textbf{C}.
\end{proof}

\begin{lemma}
\label{variablesincluidasH} Let $\Gamma \cup \Delta $ be a finite nonempty set of formulas in 
$For_1$. Then: 
if $\Gamma \Rightarrow \Delta$ is provable 
in the $\{\lnot,\vee\}$-fragment of \textbf{C}  and $var\left( \Gamma \right) \subseteq var\left( \Delta \right)$  then $\Gamma \Rightarrow \Delta$ is provable in \textbf{H} without using the Cut rule.
\end{lemma}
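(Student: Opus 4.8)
The plan is to reduce the statement to a normalization property of classical derivations. By Theorem~\ref{cut-freeC} there is a cut-free derivation of $\Gamma\Rightarrow\Delta$ in the $\{\lnot,\vee\}$-fragment of \textbf{C}. Since \textbf{H} is obtained from that fragment by leaving every rule untouched except for adding the proviso $var(\alpha)\subseteq var(\Delta)$ to the left-negation rule, it suffices to exhibit a cut-free \textbf{C}-derivation of $\Gamma\Rightarrow\Delta$ in which every application of $\lnot\Rightarrow$ already meets this proviso; such a derivation is literally a derivation in \textbf{H}. So the whole problem is to control the succedents at the negation-left steps.

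First I would isolate the invariant that makes the proviso free. Call a sequent $\Sigma\Rightarrow\Theta$ \emph{balanced} when $var(\Sigma)\subseteq var(\Theta)$; the hypothesis says the end-sequent is balanced. The point of balancedness is that at any application of $\lnot\Rightarrow$ with principal formula $\lnot\alpha$ in the antecedent one has $var(\alpha)=var(\lnot\alpha)\subseteq var(\Sigma)\subseteq var(\Theta)$, so the \textbf{H}-proviso holds automatically. I would then check, reading a derivation from the conclusion upwards, that balancedness is preserved by the two antecedent rules $\lnot\Rightarrow$ and $\vee\Rightarrow$, by the right-disjunction rule $\Rightarrow\vee$ (which leaves $var(\Theta)$ unchanged, as $var(\alpha_1\vee\alpha_2)=var(\alpha_1)\cup var(\alpha_2)$), and by left-weakening $\textrm{W}\Rightarrow$ (which only shrinks the antecedent). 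The base cases are immediate: an axiom is balanced, and a leaf whose formulas are all propositional variables is classically provable, by Theorem~\ref{compleCPL}, only if the two sides share a variable, which together with balancedness yields an instance of Ax followed by weakenings.

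The hard part will be the two rules that can destroy balancedness, namely right-negation $\Rightarrow\lnot$ and right-weakening $\Rightarrow\textrm{W}$: read upwards, each deletes a formula from the succedent (and $\Rightarrow\lnot$ simultaneously pushes $\alpha$ into the antecedent), so a variable that occurred in $\Gamma$ only through the deleted succedent formula becomes stranded and the premise need no longer be balanced. The crux is therefore a rule-permutation argument: I would try to reorganize the cut-free proof—equivalently, to run a bottom-up proof search—so that a succedent formula is decomposed or weakened away only when doing so keeps the premise balanced, always preferring to decompose antecedent formulas and those succedent negations $\lnot\delta$ whose variables $var(\delta)$ already occur elsewhere in the succedent. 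The delicate configurations are those in which the antecedent has been reduced to propositional variables and the only decomposable succedent formula is a negation $\lnot\delta$ carrying a variable occurring nowhere else in the sequent (the extreme case being a forced step with empty succedent, such as the passage from $p\Rightarrow p$ to $\lnot p,p\Rightarrow$). I expect the real content of the lemma to lie exactly in showing that, under the global hypothesis $var(\Gamma)\subseteq var(\Delta)$, such stranding configurations can always be avoided by a suitable choice of decomposition order; pinning down that choice, and proving that the resulting search terminates in axioms, is where I anticipate the main effort and the main risk.
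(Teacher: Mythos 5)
Your opening reduction is exactly right, and your balancedness invariant correctly captures why the proviso of $\lnot^{H}\Rightarrow$ is automatic on sequents $\Sigma\Rightarrow\Theta$ with $var(\Sigma)\subseteq var(\Theta)$. But the proof stops precisely where the lemma's content begins: you yourself identify that $\Rightarrow\lnot$ and $\Rightarrow W$, read upwards, can destroy balancedness, and then you only conjecture that a rule-permutation or proof-search strategy avoids the stranding configurations, explicitly deferring ``pinning down that choice, and proving that the resulting search terminates in axioms.'' That deferred step is the entire lemma; nothing in the proposal establishes that a violation-free decomposition order always exists, and a permutation argument here is genuinely delicate because $\Rightarrow\lnot$ simultaneously deletes a formula from the succedent and inserts one into the antecedent, so local rewritings tend to move the proviso violation around rather than eliminate it. As written, the proposal is an unproved plan, not a proof.

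The missing idea is that no permutation or search strategy is needed at all. The paper's argument is a single global transformation: take a cut-free derivation $\Pi$ of $\Gamma\Rightarrow\Delta$ (Theorem~\ref{cut-freeC}); since $\Pi$ is cut-free, every propositional variable occurring anywhere in $\Pi$ occurs in the end-sequent, hence, by the hypothesis $var(\Gamma)\subseteq var(\Delta)$, in $var(\Delta)$. Now add the whole set $\Delta$ to the succedent of every sequent in $\Pi$. Because sequents are pairs of finite \emph{sets}, the root $\Gamma\Rightarrow\Delta$ is unchanged and every rule instance remains a rule instance; the only repair needed is at the leaves, where each $\alpha\Rightarrow\alpha,\Delta$ is rebuilt from the axiom $\alpha\Rightarrow\alpha$ by iterated $\Rightarrow W$. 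In the resulting cut-free derivation $\Pi'$ every application of $\lnot\Rightarrow$ has conclusion succedent of the form $\Delta',\Delta$, and since $var(\alpha)\subseteq var(\Delta)$ the \textbf{H}-proviso holds at every such step, so $\Pi'$ is literally a cut-free \textbf{H}-derivation. Your balancedness invariant is a local shadow of this: the uniform $\Delta$-padding makes the relevant inclusion trivially true at every node, dispensing entirely with the termination and case analysis you flag as the main risk. To complete your proof you would either have to carry out that analysis in full or replace it with this padding construction.
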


\begin{proof}
Recall that derivations in \textbf{C} and \textbf{H} are rooted
binary trees such that the root is the sequent being proved, and
the leaves are always instances of the axiom Ax of the form $\alpha
\Rightarrow \alpha $ for some formula $\alpha$.

Assume that $\Pi$ is a cut-free derivation in the $\{\lnot,\vee\}$-fragment of  \textbf{C} of a
sequent $\Gamma \Rightarrow \Delta$ such that $var\left( \Gamma
\right) \subseteq var\left( \Delta \right)$ (we can assume this
by Theorem~\ref{cut-freeC}). If $\Pi$ is also a derivation in
\textbf{H} then the result follows automatically. Otherwise,
there are in $\Pi$, by force, applications of the rule
$\lnot \Rightarrow$, namely
$$\lnot \Rightarrow \textrm{ }%
\begin{tabular}{l}
$\Gamma' \Rightarrow \Delta' ,\alpha $ \\ \hline
$\lnot \alpha ,\Gamma' \Rightarrow \Delta' $%
\end{tabular}$$
such that the proviso required by this rule in \textbf{H} is
not satisfied. Since $\Pi$ is cut-free then the set of variables occurring in the root sequent
$\Gamma \Rightarrow \Delta$ contains all the propositional
variables occurring in $\Pi$. Then, by hypothesis, all the
propositional variables occurring in $\Pi$ belong to the set
$var\left( \Delta \right)$. Consider now the derivation $\Pi'$ in
\textbf{C} obtained from $\Pi$ in two steps: firstly, the right-hand side of each sequent (that is, of each node) of $\Pi$ is enlarged  by adding simultaneously all the formulas in $\Delta$. This generates a  
rooted binary tree $\Pi_0$ whose leafs are sequents of the form  $\alpha \Rightarrow \alpha, \Delta $. Each of such leaves of $\Pi_0$ corresponds to the original  occurrence of an  axiom (that is, a leaf)  $\alpha \Rightarrow \alpha $ in the derivation $\Pi$.  In the second step, we replace each leaf $\alpha \Rightarrow \alpha, \Delta $ of  $\Pi_0$ by a branch started by $\alpha \Rightarrow \alpha$ and followed by iterated applications of the weakening rule $\Rightarrow W$ until obtaining the sequent $\alpha \Rightarrow \alpha, \Delta $.  The resulting
rooted binary tree $\Pi'$ is clearly a (cut-free) derivation in the $\{\lnot,\vee\}$-fragment of 
\textbf{C} of the sequent $\Gamma \Rightarrow \Delta$.\footnote{Observe that some applications of  the weakening rule $\Rightarrow W$ in $\Pi$ may be innocuous in $\Pi'$.} But the critical
applications of the rule $\lnot \Rightarrow$ mentioned above have in $\Pi'$ the form
$$\lnot \Rightarrow \textrm{ }%
\begin{tabular}{l}
$\Gamma' \Rightarrow \Delta' ,\Delta,\alpha $ \\ \hline
$\lnot \alpha ,\Gamma' \Rightarrow \Delta',\Delta $.
\end{tabular}$$
Being so, these applications are allowed in
\textbf{H} (since all the propositional variables occurring in
$\Pi'$ belong to the set $var\left( \Delta \right)$) and so
$\Pi'$ is in fact a cut-free derivation in \textbf{H} of  the sequent $\Gamma \Rightarrow
\Delta$. That is, $\Gamma \Rightarrow \Delta$ is
provable in \textbf{H} without using the Cut rule.
\end{proof}

\begin{corollary}
\label{tautologiasH} Let $\Delta $ be a finite nonempty set of formulas in $For_1$. Then:
$\ \Rightarrow \Delta$ is provable in the $\{\lnot,\vee\}$-fragment of  \textbf{C} if and only if $\ \Rightarrow \Delta$ is provable in \textbf{H}.
\end{corollary}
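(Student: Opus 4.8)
The plan is to prove the two implications separately; both are direct consequences of results already established in this section for sequents with arbitrary antecedents.

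For the direction from provability in \textbf{H} to provability in the $\{\lnot,\vee\}$-fragment of \textbf{C}, I would simply invoke Proposition~\ref{HextensionC} with $\Gamma=\emptyset$. Since $\Delta$ is assumed nonempty, $\Gamma\cup\Delta$ is a finite nonempty set of formulas in $For_1$, so the hypotheses of that proposition are met; and as \textbf{H} is obtained from the fragment of \textbf{C} merely by restricting the rule $\lnot\Rightarrow$ via a proviso, every \textbf{H}-derivation is \emph{a fortiori} a derivation in the $\{\lnot,\vee\}$-fragment of \textbf{C}.

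For the converse, the key observation is that the antecedent of $\Rightarrow\Delta$ is empty, so the variable-inclusion side condition of Lemma~\ref{variablesincluidasH} holds vacuously: $var(\emptyset)=\emptyset\subseteq var(\Delta)$ for \emph{any} $\Delta$. Hence I would apply Lemma~\ref{variablesincluidasH} with $\Gamma=\emptyset$: from provability of $\Rightarrow\Delta$ in the $\{\lnot,\vee\}$-fragment of \textbf{C}, together with the vacuously satisfied hypothesis, the lemma delivers a (cut-free) derivation of $\Rightarrow\Delta$ in \textbf{H}.

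I expect no genuine obstacle, because all the substantive work has already been done in the proof of Lemma~\ref{variablesincluidasH} --- namely the transformation of a cut-free \textbf{C}-derivation into an \textbf{H}-derivation by uniformly enlarging the right-hand side of every sequent in the derivation with the formulas of $\Delta$, which repairs precisely the offending applications of $\lnot\Rightarrow$. The corollary is thus merely the special case $\Gamma=\emptyset$ of that lemma, paired with Proposition~\ref{HextensionC}; the single point worth flagging is that the emptiness of the antecedent makes the lemma's proviso trivially applicable. Semantically, this records the fact anticipated by Theorem~\ref{validezH} (the clause $\vDash_{\mathbf{CPL}}\alpha$) and by the soundness and completeness of \textbf{C}, that \textbf{H}$_3$ retains exactly the classical validities over $\Sigma_1$ having empty antecedent.
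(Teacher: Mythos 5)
Your proof is correct and is essentially the argument the paper intends: the corollary is stated without proof precisely because it is the special case $\Gamma=\emptyset$ of Lemma~\ref{variablesincluidasH} (whose variable-inclusion proviso $var(\emptyset)\subseteq var(\Delta)$ holds vacuously, as you note) combined with Proposition~\ref{HextensionC} for the converse direction. Your application of both results is hypothesis-checked correctly ($\Gamma\cup\Delta=\Delta$ is finite and nonempty), so there is nothing to add.
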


\begin{corollary}
[Modus Ponens] \label{modusponensH} Let $\alpha,\beta \in For_1$. 
Then: if $\ \Rightarrow \alpha$ and $\ \Rightarrow \alpha \rightarrow \beta$ 
are provable in \textbf{H}
 then $\ \Rightarrow \beta$ is provable in \textbf{H}. 
\end{corollary}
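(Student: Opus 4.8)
The plan is to exploit the fact, established in Corollary~\ref{tautologiasH}, that theoremhood in \textbf{H} coincides with theoremhood in the $\{\lnot,\vee\}$-fragment of \textbf{C}, so that the statement collapses to a purely classical observation. Concretely, a sequent of the form $\Rightarrow\gamma$ is provable in \textbf{H} if and only if it is provable in the $\{\lnot,\vee\}$-fragment of \textbf{C}, and by the soundness and completeness of \textbf{C} (Theorem~\ref{compleCPL}) the latter holds if and only if $\vDash_{\mathbf{CPL}}\gamma$. Thus the theorems of \textbf{H} over $\Sigma_1$ are exactly the classical tautologies over $\Sigma_1$, and the Corollary will follow from the trivial fact that the set of classical tautologies is closed under modus ponens.

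First I would translate both hypotheses into classical validities. Since $\Rightarrow\alpha$ is provable in \textbf{H}, Corollary~\ref{tautologiasH} gives a derivation of $\Rightarrow\alpha$ in the $\{\lnot,\vee\}$-fragment of \textbf{C}, whence $\vDash_{\mathbf{CPL}}\alpha$ by Theorem~\ref{compleCPL}; the same argument applied to $\Rightarrow\alpha\rightarrow\beta$ yields $\vDash_{\mathbf{CPL}}\alpha\rightarrow\beta$ (recalling that $\alpha\rightarrow\beta$ abbreviates $\lnot\alpha\vee\beta$ in $For_1$, so it is a genuine $\Sigma_1$-formula). Now I would invoke classical modus ponens at the semantic level: for an arbitrary classical valuation $v$ we have $v(\alpha)=1$ and $v(\alpha\rightarrow\beta)=1$, and the classical truth-table for $\rightarrow$ forces $v(\beta)=1$. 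As $v$ was arbitrary, $\vDash_{\mathbf{CPL}}\beta$.

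To close the loop I would run the equivalences in the opposite direction: from $\vDash_{\mathbf{CPL}}\beta$, the completeness half of Theorem~\ref{compleCPL} yields a derivation of $\Rightarrow\beta$ in the $\{\lnot,\vee\}$-fragment of \textbf{C}, and a final application of Corollary~\ref{tautologiasH} transports this derivation into \textbf{H}, giving the desired proof of $\Rightarrow\beta$ in \textbf{H}.

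The one point that requires care---and which I regard as the conceptual, rather than technical, obstacle---is to resist arguing ``inside'' \textbf{H} by combining the two premises via a single modus-ponens step. The inference $\alpha,\alpha\rightarrow\beta\vDash_{\mathbf{H}_3}\beta$ is invalid (indeed modus ponens fails in \textbf{H}$_3$, as remarked before Theorem~\ref{validezB}), so one cannot derive $\Rightarrow\beta$ from $\alpha\Rightarrow\beta$ together with $\Rightarrow\alpha$ by a cut-like manoeuvre. The statement is instead a closure property of \emph{theorems}, and it is exactly the reduction of \textbf{H}-theoremhood to classical tautologyhood (Corollary~\ref{tautologiasH} together with Theorem~\ref{compleCPL}) that makes the classical rule available where the three-valued inference is not.
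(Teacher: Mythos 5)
Your proposal is correct and takes essentially the paper's intended route: the statement is placed as an immediate consequence of Corollary~\ref{tautologiasH}, which reduces theoremhood in \textbf{H} to theoremhood in the $\{\lnot,\vee\}$-fragment of \textbf{C}, and hence (via Theorem~\ref{compleCPL}) to classical tautologyhood, where closure under modus ponens is trivial. Your cautionary remark is also well taken: since $\alpha,\alpha\rightarrow\beta\nvDash_{\mathbf{H}_3}\beta$, no argument combining the two premises inside \textbf{H} by a cut-style step could succeed, and the result really is a closure property of theorems rather than a valid inference rule.
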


\begin{lemma}
\label{variablesnoincluidasH} Let $\Gamma \cup \Delta $ be a finite nonempty set of formulas in $For_1$. 
If $\vDash_{\mathbf{H}_3}\Gamma \Rightarrow \Delta $ but $var\left( \Gamma \right)
\varsubsetneq var\left( \Delta \right) $ then there exists $\Gamma'
\subset \Gamma $ such that $\vDash _{\mathbf{H}_3}\Gamma'\Rightarrow
\Delta $, where $var\left( \Gamma'\right) \subseteq var\left( \Delta
\right) $.
\end{lemma}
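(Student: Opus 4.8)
The plan is to obtain $\Gamma'$ by deleting from $\Gamma$ exactly those premises that mention a variable unavailable in $\Delta$, and to argue that this deletion preserves $\mathbf{H}_3$-validity precisely because $\frac12$ is infectious. Since $var(\Gamma)\nsubseteq var(\Delta)$, there is at least one variable of $\Gamma$ not occurring in $\Delta$; let $P := var(\Gamma)\setminus var(\Delta)$ be the (nonempty) set of all such forbidden variables. I would then put
\[
\Gamma' := \{\gamma \in \Gamma : var(\gamma)\cap P = \emptyset\},
\]
so that $\Gamma'$ collects exactly those premises whose variables all occur in $\Delta$.

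Two of the three required properties are then immediate. First, $var(\Gamma')\subseteq var(\Delta)$: each $\gamma\in\Gamma'$ satisfies $var(\gamma)\subseteq var(\Gamma)\setminus P = var(\Gamma)\cap var(\Delta)\subseteq var(\Delta)$. Second, $\Gamma'\varsubsetneq\Gamma$: choosing any $p\in P$, the definition of $P$ furnishes some $\gamma_0\in\Gamma$ with $p\in var(\gamma_0)$, and this $\gamma_0$ is excluded from $\Gamma'$.

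The substantive step is to verify $\vDash_{\mathbf{H}_3}\Gamma'\Rightarrow\Delta$. Let $v$ be any $\mathbf{H}_3$-valuation with $v(\Gamma')\subseteq D_{\mathbf{H}_3}=\{1,\frac12\}$; by Definition~\ref{modelosecuente} it suffices to exhibit some $\delta\in\Delta$ with $v(\delta)\in\{1,\frac12\}$. I would perturb $v$ into the valuation $v'$ that sends every variable in $P$ to $\frac12$ and agrees with $v$ on all other variables. For $\gamma\in\Gamma\setminus\Gamma'$ some variable of $\gamma$ lies in $P$, hence is sent to $\frac12$, so $v'(\gamma)=\frac12$ by the infectiousness of $\frac12$ (Proposition~\ref{infec}); for $\gamma\in\Gamma'$ the valuations $v$ and $v'$ coincide on $var(\gamma)$, so $v'(\gamma)=v(\gamma)\in\{1,\frac12\}$. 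Thus $v'(\Gamma)\subseteq\{1,\frac12\}$, and the hypothesis $\vDash_{\mathbf{H}_3}\Gamma\Rightarrow\Delta$ supplies some $\delta\in\Delta$ with $v'(\delta)\in\{1,\frac12\}$. Since $var(\Delta)\cap P=\emptyset$ we have $var(\delta)\cap P=\emptyset$, so $v$ and $v'$ agree on $var(\delta)$ and therefore $v(\delta)=v'(\delta)\in\{1,\frac12\}$, as needed.

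The heart of the argument---and the only place where care is required---is the double use of Proposition~\ref{infec}: raising all forbidden variables to $\frac12$ must simultaneously push every discarded premise into the designated set and leave untouched the value of each $\delta\in\Delta$. Both happen exactly because $P$ is disjoint from $var(\Delta)$, which is how the variable hypothesis is consumed. If one prefers to avoid the simultaneous perturbation, the same conclusion follows by removing a single forbidden variable at a time and iterating; the process terminates since $\Gamma$ is finite and each round discards at least the premise witnessing the chosen variable.
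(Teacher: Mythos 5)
Your proof is correct and takes essentially the same route as the paper's: the same set $\Gamma'$ of premises whose variables all lie in $var(\Delta)$, and the same perturbed valuation $v'$ sending every variable of $var(\Gamma)\setminus var(\Delta)$ to $\frac12$, so that infectiousness (Proposition~\ref{infec}) designates the discarded premises while the values on $\Delta$ are untouched. The only difference is presentational: the paper splits into the cases $\frac12\in v(\Gamma')$ and $v(\Gamma')\subseteq\{1\}$ and argues the second by contradiction, whereas you run the perturbation argument directly and uniformly, which is slightly cleaner.
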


\begin{proof}
Observe that if $\vDash _{\mathbf{H}_3}\Gamma \Rightarrow
\Delta $ then $\Delta \neq \emptyset $.

Assume that $\vDash _{\mathbf{H}_3}\Gamma \Rightarrow \Delta $ such that $var\left( \Gamma \right) \varsubsetneq var\left( \Delta \right) $. So, given a valuation $v$ for $\mathbf{H}_3$, 
if $v\left( \Gamma \right) \subseteq \left\{ 1,
{\frac12}
\right\} $ then $v\left( \delta \right) \in \left\{ 1,
{\frac12}
\right\} $ for some formula $\delta \in \Delta $. Given that $var\left( \Gamma
\right) \varsubsetneq var\left( \Delta \right) $ consider the set $\Gamma'
=\Gamma \setminus\left\{ \gamma \in \Gamma \ : \ var\left( \gamma\right) \varsubsetneq var\left( \Delta \right) \right\} $.
Then, $\Gamma'\subset \Gamma $ and $var\left( \Gamma'\right)
\subseteq var\left( \Delta \right) $. Let $v$ be a valuation  for $\mathbf{H}_3$ such that  
$v\left( \Gamma \textrm{'}\right) \subseteq \left\{ 1,
{\frac12}
\right\} $. If ${\frac12} \in v\left( \Gamma \textrm{'}\right) $ then $v\left( p\right) =
{\frac12}
$ for some propositional variable $p\in var\left( \Gamma'\right) $.
Since $var\left( \Gamma'\right) \subseteq var\left( \Delta \right) $, then $
{\frac12}\in v\left( \Delta \right) $. If $v\left( \Gamma'\right) \subseteq \left\{ 1\right\} $, suppose that 
$v\left( \Delta \right) =\left\{ 0\right\} $. Then $v\left( p\right) \in \left\{ 1,0\right\} $ for
every propositional variable $p\in var\left( \Delta \right) $, by Proposition~\ref{infec}. Since $var\left( \Gamma'\right) \subseteq var\left( \Delta \right) $ then, for every propositional variable $p\in var\left( \Gamma'\right) $, $v\left(
p\right) \in \left\{ 1,0\right\} $. Consider now a valuation $v'$ for $\mathbf{H}_3$ such that  
$v'\left( p\right) ={\frac12}$ for every $p\in var\left( \Gamma\right) \setminus var\left( \Delta\right)$, and  
$v'\left(p\right)= v\left(p\right) $ for every $p\in var\left( \Delta \right)$. Then, $v'\left( \Gamma\right)  \subseteq \left\{ 1,{\frac12}\right\} $. But then, by hypothesis, $v'\left( \delta \right)
\in \left\{ 1,{\frac12}\right\} $, for some $\delta \in \Delta $. That is, $v\left( \delta \right)
\in \left\{ 1,{\frac12}\right\} $ for some $\delta \in \Delta $, a contradiction.
Therefore, if $v\left( \Gamma'\right) \subseteq \left\{
1\right\} $ then $v\left( \delta \right) \neq 0$, for some $\delta \in \Delta $. So, $\vDash _{\mathbf{H}_3}\Gamma $'$\Rightarrow \Delta $.
\end{proof}

\begin{theorem}
[Completeness  of \textbf{H}] \label{completudsecH} Let $\Gamma \cup \Delta $ be a finite nonempty set of formulas in $For_1$.
If $\vDash_{\mathbf{H}_3}\Gamma \Rightarrow \Delta$ then
$\Gamma \Rightarrow \Delta$ is provable in \textbf{H} without using the Cut rule. In particular, if $\Gamma \vDash _{\mathbf{H}_3}\alpha$
 then the sequent $\Gamma \Rightarrow \alpha$ is provable in \textbf{H}, for every finite set $\Gamma \cup \{\alpha\}$.
\end{theorem}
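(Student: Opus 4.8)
The plan is to establish the completeness of \textbf{H} by combining the soundness-derived semantic facts with the cut-free transfer lemma (Lemma~\ref{variablesincluidasH}) and the variable-reduction lemma (Lemma~\ref{variablesnoincluidasH}), splitting into cases according to whether the variable inclusion $var(\Gamma)\subseteq var(\Delta)$ holds. So assume $\vDash_{\mathbf{H}_3}\Gamma\Rightarrow\Delta$. By Proposition~\ref{CextensionH} we immediately obtain $\vDash_{\mathbf{CPL}}\Gamma\Rightarrow\Delta$, which by Theorem~\ref{compleCPL} means the sequent $\Gamma\Rightarrow\Delta$ is provable in the $\{\lnot,\vee\}$-fragment of \textbf{C}. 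This moves us from the semantic side of \textbf{H}$_3$ to classical provability, which is the correct launching point for the syntactic transfer lemmas.

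First I would handle the case $var(\Gamma)\subseteq var(\Delta)$. Here the work is already done: since $\Gamma\Rightarrow\Delta$ is provable in the $\{\lnot,\vee\}$-fragment of \textbf{C} and the variable inclusion holds, Lemma~\ref{variablesincluidasH} directly yields a cut-free derivation of $\Gamma\Rightarrow\Delta$ in \textbf{H}. This is the clean case that motivated the proviso on $\lnot^{H}\Rightarrow$ in the first place.

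The remaining case is $var(\Gamma)\varsubsetneq var(\Delta)$, which is the main obstacle, since now a naive attempt to reuse the classical derivation would hit an application of $\lnot\Rightarrow$ whose \textbf{H}-proviso fails. The key is to first shrink $\Gamma$. By Lemma~\ref{variablesnoincluidasH}, since $\vDash_{\mathbf{H}_3}\Gamma\Rightarrow\Delta$ with $var(\Gamma)\varsubsetneq var(\Delta)$, there exists $\Gamma'\subset\Gamma$ with $var(\Gamma')\subseteq var(\Delta)$ and $\vDash_{\mathbf{H}_3}\Gamma'\Rightarrow\Delta$. Now $\Gamma'\Rightarrow\Delta$ satisfies the variable inclusion, so by Proposition~\ref{CextensionH} and Theorem~\ref{compleCPL} it is provable in the $\{\lnot,\vee\}$-fragment of \textbf{C}, and then Lemma~\ref{variablesincluidasH} gives a cut-free derivation of $\Gamma'\Rightarrow\Delta$ in \textbf{H}. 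Finally I would recover the original sequent by applying the left weakening rule $\textrm{W}\Rightarrow$ finitely many times to reinstate the formulas in $\Gamma\setminus\Gamma'$; since weakening on the left introduces no connectives and triggers no proviso, this preserves cut-freeness and yields a cut-free derivation of $\Gamma\Rightarrow\Delta$ in \textbf{H}.

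The particular case $\Gamma\vDash_{\mathbf{H}_3}\alpha$ follows by taking $\Delta=\{\alpha\}$, using the observation that $\vDash_{\mathbf{H}_3}\Gamma\Rightarrow\alpha$ iff $\Gamma\vDash_{\mathbf{H}_3}\alpha$. The delicate point to get right is that the variable-reduction step must come \emph{before} the transfer to \textbf{H}: one cannot simply patch the failed classical derivation directly, because the failing instances of $\lnot\Rightarrow$ are exactly those the \textbf{H}-proviso forbids; the semantic content of Lemma~\ref{variablesnoincluidasH} is what guarantees the discarded antecedent formulas were redundant for validity, so that left weakening suffices to restore them syntactically.
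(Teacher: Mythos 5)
Your proposal is correct and takes essentially the same approach as the paper's own proof: the same passage from $\vDash_{\mathbf{H}_3}$-validity to classical provability via Proposition~\ref{CextensionH} and Theorem~\ref{compleCPL}, the same case split on whether $var(\Gamma)\subseteq var(\Delta)$ holds, the same use of Lemma~\ref{variablesnoincluidasH} to shrink $\Gamma$ to $\Gamma'$ before transferring via Lemma~\ref{variablesincluidasH}, and the same final recovery of $\Gamma\Rightarrow\Delta$ by repeated applications of $W\Rightarrow$, which preserve cut-freeness. Your closing remark that the variable-reduction must precede the transfer to \textbf{H} accurately captures the point of the paper's argument.
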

\begin{proof}
Assume that $\vDash _{\mathbf{H}_3}\Gamma \Rightarrow \Delta $. Then, by
Proposition \ref{CextensionH}, $\vDash _{\mathbf{CPL}}\Gamma \Rightarrow
\Delta $. By Theorem~\ref{compleCPL}, $\Gamma \Rightarrow \Delta$ is provable in the $\{\lnot,\vee\}$-fragment of   \textbf{C}. If $var\left( \Gamma \right) \subseteq
var\left( \Delta \right) $ then, by Lemma~\ref{variablesincluidasH}, $\Gamma \Rightarrow \Delta$ is provable in \textbf{H} without using the Cut rule. If $var\left( \Gamma \right)
\varsubsetneq var\left( \Delta \right) $ then, by Lemma \ref%
{variablesnoincluidasH}, there exist a set $\Gamma'\subset \Gamma $ such
that $\vDash _{\mathbf{H}_3}\Gamma'\Rightarrow \Delta $, where $var\left(
\Gamma'\right) \subseteq var\left( \Delta \right) $. Then, using
Proposition~\ref{CextensionH} and Theorem~\ref{compleCPL} again,
we obtain that $\Gamma' \Rightarrow \Delta $ is provable in the $\{\lnot,\vee\}$-fragment of  \textbf{C}.
Since $var\left( \Gamma'\right) \subseteq var\left( \Delta \right)$ then, by using Lemma~\ref{variablesincluidasH}, it follows that $\Gamma'\Rightarrow \Delta $  is provable in \textbf{H} without using the Cut rule.  By applying the structural rule $W\Rightarrow$ several times we obtain  a derivation of  $\Gamma \Rightarrow \Delta$ in \textbf{H} without using the Cut rule, as desired.
\end{proof}

\begin{corollary}
[Cut elimination for \textbf{H}] \label{cut-freeH} Let $\Gamma \cup \Delta $ be a finite nonempty set of formulas in $For$.
If the sequent $\Gamma \Rightarrow \Delta$ is provable in \textbf{H} then there is a cut-free derivation of it in  \textbf{H}.
\end{corollary}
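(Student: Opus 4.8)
The plan is to derive the Cut elimination corollary as an immediate consequence of the Completeness theorem (Theorem~\ref{completudsecH}) together with the Soundness theorem (Theorem~\ref{correccionseqH}), exploiting the fact that the completeness proof already produces \emph{cut-free} derivations. First I would observe that the statement is trivial in content: it says that provability in \textbf{H} coincides with cut-free provability in \textbf{H}, so the whole argument is a short round-trip through the semantics. The only subtlety to flag at the outset is a typographical one---the corollary is stated for formulas in $For$, but it should read $For_1$, since \textbf{H} is the calculus over $\Sigma_1$; I would silently read it as $For_1$.

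The key steps, in order, are as follows. Suppose $\Gamma \Rightarrow \Delta$ is provable in \textbf{H}, with $\Gamma \cup \Delta$ a finite nonempty set of formulas in $For_1$. By the Soundness theorem (Theorem~\ref{correccionseqH}), provability in \textbf{H} gives $\vDash_{\mathbf{H}_3} \Gamma \Rightarrow \Delta$. Now I feed this semantic validity back into the Completeness theorem (Theorem~\ref{completudsecH}), whose conclusion is precisely that $\Gamma \Rightarrow \Delta$ is provable in \textbf{H} \emph{without using the Cut rule}. Chaining these two implications yields exactly the desired statement: any \textbf{H}-provable sequent has a cut-free \textbf{H}-derivation.

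There is really no obstacle here, because the hard work was already done inside Theorem~\ref{completudsecH}: its proof routes every valid sequent through the cut-free fragment of \textbf{C} (via Theorem~\ref{compleCPL} and Theorem~\ref{cut-freeC}) and then transfers it to \textbf{H} using Lemma~\ref{variablesincluidasH} and Lemma~\ref{variablesnoincluidasH}, neither of which introduces a cut. The single point worth double-checking is that the Cut rule is genuinely admissible rather than merely eliminable in this restricted calculus---that is, that the soundness argument of Lemma~\ref{reglasHvalidas} does not secretly rely on Cut being present; but since \textbf{H} is defined by adding provisos to the cut-free fragment of \textbf{C} and Cut is never invoked in the forward (completeness) direction, this is immediate. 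Thus the proof is simply: combine Theorems~\ref{correccionseqH} and~\ref{completudsecH}, reading $For$ as $For_1$.
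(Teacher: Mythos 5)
Your proof is correct and is essentially identical to the paper's own argument: both derive the corollary by applying the Soundness theorem (Theorem~\ref{correccionseqH}) to obtain $\vDash_{\mathbf{H}_3}\Gamma \Rightarrow \Delta$ and then invoking the Completeness theorem (Theorem~\ref{completudsecH}), whose conclusion already delivers a cut-free derivation. Your reading of $For$ as $For_1$ is also the right correction of the paper's typo.
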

\begin{proof}
Suppose that $\Gamma \Rightarrow \Delta$ is provable in \textbf{H}. By Theorem~\ref{correccionseqH}, $\vDash_{\mathbf{H}_3}\Gamma \Rightarrow \Delta$. Then, by Theorem~\ref{completudsecH}, there is a  cut-free  derivation of  $\Gamma \Rightarrow \Delta$ in \textbf{H}. 
\end{proof}

\section{Sequent calculus \textbf{B} for the $\{\lnot,\wedge\}$-fragment of  Bochvar's logic \textbf{B}$_{3}$}

In this section we introduce the sequent calculus  \textbf{B} which will result cut-free, sound and complete for the conjunction-negation fragment of the nonsense logic \textbf{B}$_{3}$, where $\vee$ and $\to$ are derived connectives. As we shall see, there exists a symmetry between the provisos imposed in the rules of \textbf{B} and those imposed in \textbf{H}, as long as the language $\neg$, $\wedge$, $\vee$, $\to$ is considered.

\begin{definition}
The sequent calculus \textbf{B} is obtained from the $\{\lnot,\wedge\}$-fragment of 
 \textbf{C}  by replacing the rule $\Rightarrow \lnot$ by the following one:

\begin{center}
\[\Rightarrow \lnot ^{B}%
\begin{tabular}{c}
$\alpha ,\Gamma \Rightarrow \Delta $ \\ \hline
$\Gamma \Rightarrow \Delta ,\lnot \alpha $%
\end{tabular}%
\hspace{0.5cm} \mbox{provided that } \
var(\alpha) \subseteq var(\Gamma)
\]

\end{center}

\end{definition}

\

\begin{proposition}
The following  rules are derivable in \textbf{B}: 
\[
\vee \Rightarrow \textrm{ }%
\begin{tabular}{cc}
$\alpha _{1},\Gamma \Rightarrow \Delta $ & $\alpha _{2},\Gamma \Rightarrow
\Delta $ \\ \hline
\multicolumn{2}{c}{$\alpha _{1}\vee \alpha _{2},\Gamma \Rightarrow \Delta $}%
\end{tabular}%
\noindent \hspace{1cm}\Rightarrow \vee ^{B}\textrm{ }%
\begin{tabular}{l}
$\Gamma \Rightarrow \Delta ,\alpha _{1},\alpha _{2}$ \\ \hline
$\Gamma \Rightarrow \Delta ,\alpha _{1}\vee \alpha _{2}{}_{{}}$%
\end{tabular}%
\medskip 
\]
with the following proviso: $var(\{\alpha _{1},\alpha _{2}\}) \subseteq var(\Gamma)$ in $\Rightarrow \vee ^{B}$.
\end{proposition}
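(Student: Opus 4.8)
The plan is to exploit the symmetry with the $\mathbf{H}$ case. Since in $For_2$ we have $\alpha_1 \vee \alpha_2 =_{def} \lnot(\lnot\alpha_1 \wedge \lnot\alpha_2)$, the rule $\Rightarrow \vee^B$ is the exact dual of $\wedge^H \Rightarrow$, and $\vee \Rightarrow$ is the dual of $\Rightarrow \wedge$. In both cases I would simulate the derived disjunction rule by peeling off the outer negation, thereby reducing the goal to the genuine $\wedge$-rules of the $\{\lnot,\wedge\}$-fragment of $\mathbf{C}$, which survive unchanged in $\mathbf{B}$, together with the one restricted rule $\Rightarrow \lnot^B$ and the unrestricted $\lnot \Rightarrow$.

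For $\Rightarrow \vee^B$, assume the proviso $var(\{\alpha_1,\alpha_2\}) \subseteq var(\Gamma)$. Starting from $\Gamma \Rightarrow \Delta, \alpha_1, \alpha_2$, I would apply $\lnot \Rightarrow$ twice to obtain $\lnot\alpha_1, \lnot\alpha_2, \Gamma \Rightarrow \Delta$, then $\wedge \Rightarrow$ to get $\lnot\alpha_1 \wedge \lnot\alpha_2, \Gamma \Rightarrow \Delta$, and finally $\Rightarrow \lnot^B$ to reach $\Gamma \Rightarrow \Delta, \lnot(\lnot\alpha_1 \wedge \lnot\alpha_2)$, that is, $\Gamma \Rightarrow \Delta, \alpha_1 \vee \alpha_2$. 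The proviso of $\Rightarrow \lnot^B$ here demands $var(\lnot\alpha_1 \wedge \lnot\alpha_2) \subseteq var(\Gamma)$, which is exactly the assumed proviso $var(\{\alpha_1,\alpha_2\}) \subseteq var(\Gamma)$; so this direction is routine.

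For $\vee \Rightarrow$, which carries no proviso, the derivation is more delicate, because the only step that could block is the use of $\Rightarrow \lnot^B$, whose variable condition cannot be assumed. The trick, dual to the one used in the $\Rightarrow \wedge$ derivation for $\mathbf{H}$, is to weaken the principal formula into the antecedent before applying $\Rightarrow \lnot^B$. Concretely, from each premise $\alpha_i, \Gamma \Rightarrow \Delta$ I would first apply $W \Rightarrow$ to obtain $\lnot(\lnot\alpha_1 \wedge \lnot\alpha_2), \alpha_i, \Gamma \Rightarrow \Delta$, and then $\Rightarrow \lnot^B$ to get $\lnot(\lnot\alpha_1 \wedge \lnot\alpha_2), \Gamma \Rightarrow \Delta, \lnot\alpha_i$. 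Next I would combine the two resulting sequents by $\Rightarrow \wedge$ into $\lnot(\lnot\alpha_1 \wedge \lnot\alpha_2), \Gamma \Rightarrow \Delta, \lnot\alpha_1 \wedge \lnot\alpha_2$, and finally apply $\lnot \Rightarrow$; since sequents are sets, the two copies of $\lnot(\lnot\alpha_1 \wedge \lnot\alpha_2)$ on the left collapse, yielding $\lnot(\lnot\alpha_1 \wedge \lnot\alpha_2), \Gamma \Rightarrow \Delta$, i.e., $\alpha_1 \vee \alpha_2, \Gamma \Rightarrow \Delta$.

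The main obstacle is precisely the proviso of $\Rightarrow \lnot^B$ in this second derivation, and the weakening step is what disarms it: after inserting $\lnot(\lnot\alpha_1 \wedge \lnot\alpha_2)$ on the left, the antecedent has a variable set containing $var(\alpha_1) \cup var(\alpha_2)$, so in particular $var(\alpha_i) \subseteq var(\lnot(\lnot\alpha_1 \wedge \lnot\alpha_2), \Gamma)$ and the rule applies legitimately. I expect no other difficulties; the remaining steps ($\lnot \Rightarrow$, $\wedge \Rightarrow$, $\Rightarrow \wedge$ and $W \Rightarrow$) are all unrestricted rules inherited from the $\{\lnot,\wedge\}$-fragment of $\mathbf{C}$, and the final contraction of the duplicated formula is immediate from the Definition of sequent as a pair of finite \emph{sets} of formulas.
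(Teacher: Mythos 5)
Your derivations are correct and are precisely the intended argument: the paper leaves this proof as an exercise, and what you give is the exact dualization of the paper's explicit derivations of $\wedge^{H}\Rightarrow$ and $\Rightarrow\wedge$ in \textbf{H} (restricted negation rule enabled directly by the assumed proviso in one direction; weakening in the principal formula first, so that its variables legitimize the restricted $\Rightarrow\lnot^{B}$ step, with the final unrestricted $\lnot\Rightarrow$ collapsing the duplicate by the set-based reading of sequents, in the other). Nothing is missing.
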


\begin{proof}
We leave the easy proof as an exercise to the reader.
\end{proof}

\bigskip

\begin{proposition}
The following  implicational rules are derivable  in \textbf{B}:%
\[
\rightarrow \mathbf{\Rightarrow }\textrm{ }%
\begin{tabular}{cc}
$\Gamma \Rightarrow \Delta ,\alpha _{1}$ & $\alpha _{2},\Gamma \Rightarrow
\Delta $ \\ \hline
\multicolumn{2}{c}{$\alpha _{1}\rightarrow \alpha _{2},\Gamma \Rightarrow
\Delta $}%
\end{tabular}%
\hspace{1cm}\mathbf{\Rightarrow }\rightarrow ^{B}%
\begin{tabular}{l}
$\alpha _{1},\Gamma \Rightarrow \Delta ,\alpha _{2}$ \\ \hline
$\Gamma \Rightarrow \Delta ,\alpha _{1}\rightarrow \alpha _{2}$%
\end{tabular}%
\]
with the following proviso: $var(\{\alpha _{1},\alpha _{2}\}) \subseteq var(\Gamma)$ in $\mathbf{\Rightarrow }\rightarrow ^{B}$.
\end{proposition}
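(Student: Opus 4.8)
The plan is to exploit that, in \textbf{B}, the formula $\alpha_1\to\alpha_2$ abbreviates $\lnot(\alpha_1\wedge\lnot\alpha_2)$, so that both rules can be simulated using only the primitive rules $\wedge\Rightarrow$, $\Rightarrow\wedge$, $\lnot\Rightarrow$ (all unchanged from the $\{\lnot,\wedge\}$-fragment of \textbf{C}), the restricted rule $\Rightarrow\lnot^{B}$, and weakening. I write $\rho=\alpha_1\wedge\lnot\alpha_2$ throughout, so the conclusions to reach are $\lnot\rho,\Gamma\Rightarrow\Delta$ for $\to\Rightarrow$ and $\Gamma\Rightarrow\Delta,\lnot\rho$ for $\Rightarrow\to^{B}$.

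I would dispose of $\Rightarrow\to^{B}$ first, since it is routine. Starting from the premise $\alpha_1,\Gamma\Rightarrow\Delta,\alpha_2$, apply $\lnot\Rightarrow$ to send $\alpha_2$ into the antecedent as $\lnot\alpha_2$, then $\wedge\Rightarrow$ to contract $\alpha_1,\lnot\alpha_2$ into $\rho$, reaching $\rho,\Gamma\Rightarrow\Delta$, and finally $\Rightarrow\lnot^{B}$ to obtain $\Gamma\Rightarrow\Delta,\lnot\rho$. This last step is exactly the one bearing a proviso, and it is licensed because $var(\rho)=var(\{\alpha_1,\alpha_2\})\subseteq var(\Gamma)$ by hypothesis; this is precisely why the proviso is stated for $\Rightarrow\to^{B}$.

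The delicate rule is $\to\Rightarrow$, which carries no proviso, and I expect the real work to be here. The obstacle is that the only way to place $\lnot\alpha_2$ on the right is through $\Rightarrow\lnot^{B}$, whose proviso asks that the variables of $\alpha_2$ occur in the antecedent; relative to $\Gamma$ alone this may fail. Nor can one simply introduce $\lnot\rho$ on the left by a final application of $\lnot\Rightarrow$, for that would demand the premise $\Gamma\Rightarrow\Delta,\rho$, which is not in general valid in \textbf{B}$_{3}$ (give some variable of $\alpha_2$ the value $\frac12$), so this naive route is hopeless. The resolution, which is the same device used to derive $\vee\Rightarrow$ in the previous proposition and mirrors the derivation of $\Rightarrow\wedge$ in \textbf{H}, is to \emph{retain} the implication formula $\lnot\rho$ in the antecedent from the start, so that it supplies the variables of $\alpha_2$ that the restricted negation rule needs.

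Concretely, I would weaken $\lnot\rho$ into both premises by $W\Rightarrow$, turning $\Gamma\Rightarrow\Delta,\alpha_1$ into $\lnot\rho,\Gamma\Rightarrow\Delta,\alpha_1$ and $\alpha_2,\Gamma\Rightarrow\Delta$ into $\alpha_2,\lnot\rho,\Gamma\Rightarrow\Delta$. Now $\Rightarrow\lnot^{B}$ applies to the latter, because $var(\alpha_2)\subseteq var(\rho)=var(\lnot\rho)$ makes the proviso hold independently of $\Gamma$, giving $\lnot\rho,\Gamma\Rightarrow\Delta,\lnot\alpha_2$; combining this with $\lnot\rho,\Gamma\Rightarrow\Delta,\alpha_1$ by $\Rightarrow\wedge$ yields $\lnot\rho,\Gamma\Rightarrow\Delta,\rho$. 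The final move is the folding that makes the trick pay off: $\lnot\Rightarrow$ (no proviso) applied to the occurrence of $\rho$ on the right produces
\[
\lnot\rho,\lnot\rho,\Gamma\Rightarrow\Delta,
\]
which collapses to $\lnot\rho,\Gamma\Rightarrow\Delta$ since antecedents are sets, that is, to $\alpha_1\to\alpha_2,\Gamma\Rightarrow\Delta$. No Cut is used, so both rules are even derivable cut-free. The single point I would check with care is that every intermediate sequent respects the \textbf{B}-provisos, which reduces to the observation already exploited: keeping $\lnot\rho$ on the left is exactly what keeps $var(\alpha_2)$ available at the moment $\Rightarrow\lnot^{B}$ is invoked.
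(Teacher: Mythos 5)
Your proposal is correct, and it takes essentially the approach the paper intends: the paper leaves this proof to the reader, and your derivations are the exact duals of the paper's explicit derivations for \textbf{H} (in particular, weakening $\lnot\rho$ into the antecedent so that $var(\alpha_2)\subseteq var(\lnot\rho)$ licenses $\Rightarrow \lnot^{B}$, then closing with the unrestricted $\lnot\Rightarrow$ and set-contraction, mirrors the paper's derivation of $\Rightarrow\wedge$ in \textbf{H}). All intermediate steps respect the \textbf{B}-provisos, the proviso of the proposition is consumed exactly where you place it (the final $\Rightarrow\lnot^{B}$ step of $\Rightarrow\rightarrow^{B}$), and both rules are indeed derivable cut-free.
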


\begin{proof}
The proof is also left to the reader.
\end{proof}

\subsection{Soundness of \textbf{B}}

In order to prove the Soundness Theorem for \textbf{B}, we will prove that
every sequent rule of the calculus \textbf{B} preserves
validity.

\begin{lemma}
\label{reglasBvalidas} Every sequent rule of the calculus \textbf{B} preserves
validity.
\end{lemma}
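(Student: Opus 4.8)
The plan is to show that each rule of \textbf{B} preserves validity in \textbf{B}$_3$, proceeding rule by rule exactly as in the dual Lemma~\ref{reglasHvalidas} for \textbf{H}, but with the roles of antecedent and succedent interchanged and the designated set $D_{\mathbf{B}_3}=\{1\}$ in place of $D_{\mathbf{H}_3}=\{1,\frac12\}$. First I would dispatch the axiom Ax and the structural rules ($W\Rightarrow$, $\Rightarrow W$, Cut) by noting, as before, that they encode properties valid in every Tarskian logic, and \textbf{B} is Tarskian since it is a matrix logic. The remaining cases are the operational rules $\lnot\Rightarrow$, $\wedge\Rightarrow$, $\Rightarrow\wedge$, and the modified rule $\Rightarrow\lnot^{B}$.

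The routine cases are the three unmodified rules. For $\lnot\Rightarrow$, given $v\vDash_{\mathbf{B}_3}\Gamma\Rightarrow\Delta,\alpha$ with $v(\lnot\alpha)=1$ and $v(\Gamma)\subseteq\{1\}$, one observes $v(\alpha)=0$, so $\alpha$ cannot be the witness and the hypothesis yields a designated $\delta\in\Delta$. For $\wedge\Rightarrow$ and $\Rightarrow\wedge$ the verification is equally direct, using that $v(\alpha_1\wedge\alpha_2)=1$ iff $v(\alpha_1)=v(\alpha_2)=1$ and splitting on whether a conjunct fails to be $1$. The one genuinely delicate case, mirroring $\lnot^{H}\Rightarrow$ in the \textbf{H} proof, is $\Rightarrow\lnot^{B}$, where the proviso $var(\alpha)\subseteq var(\Gamma)$ is essential.

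For $\Rightarrow\lnot^{B}$, suppose $v\vDash_{\mathbf{B}_3}\alpha,\Gamma\Rightarrow\Delta$ with $var(\alpha)\subseteq var(\Gamma)$, and assume $v(\Gamma)\subseteq\{1\}$; I must show some $v(\delta)=1$ for $\delta\in\Delta$ or $v(\lnot\alpha)=1$. Since $v(\Gamma)\subseteq\{1\}$, Proposition~\ref{infec} guarantees that no variable in $var(\Gamma)$ receives $\frac12$, hence by the proviso no variable of $\alpha$ does either, so $v(\alpha)\in\{1,0\}$ again by Proposition~\ref{infec}. If $v(\alpha)=1$ then $v(\alpha,\Gamma)\subseteq\{1\}$ and the hypothesis forces a designated $\delta\in\Delta$; if instead $v(\alpha)=0$ then $v(\lnot\alpha)=1$ and the sequent is satisfied directly. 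This is the step where the symmetry with \textbf{H} is cleanest: there the proviso on $\lnot^{H}\Rightarrow$ keeps $\frac12$ out of the succedent formula, whereas here it keeps $\frac12$ out of the antecedent formula $\alpha$, which is exactly what lets us rule out the `infectious' value and reduce to the two classical values.

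The main obstacle — and the only point requiring real care — is precisely this $\Rightarrow\lnot^{B}$ case: one must resist the temptation to argue purely classically and instead invoke Proposition~\ref{infec} twice (once to propagate the absence of $\frac12$ from the designated antecedent to the variables of $\alpha$ via the proviso, and once to conclude $v(\alpha)\in\{1,0\}$), because without the proviso a valuation assigning $\frac12$ to a variable of $\alpha$ not occurring in $\Gamma$ would give $v(\alpha)=v(\lnot\alpha)=\frac12\notin D_{\mathbf{B}_3}$, breaking soundness. All other cases are mechanical truth-table checks and the structural cases are immediate, so the proof reduces to laying out these four verifications in a \texttt{description} environment parallel to that of Lemma~\ref{reglasHvalidas}.
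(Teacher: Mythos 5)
Your proposal is correct and follows essentially the same route as the paper's proof: structural rules dismissed via the Tarskian/matrix observation, mechanical checks for $\lnot\Rightarrow$, $\wedge\Rightarrow$ and $\Rightarrow\wedge$, and for $\Rightarrow\lnot^{B}$ the same double appeal to Proposition~\ref{infec} (first to exclude $\frac12$ from $var(\Gamma)$ when $v(\Gamma)\subseteq\{1\}$, then via the proviso to conclude $v(\alpha)\in\{1,0\}$ and split on the two classical cases). Your closing remark on why the proviso is indispensable is a sound addition, though not part of the paper's own argument.
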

\begin{proof} 
As in the case of \textbf{H}, it is enough to analyze the rules for connectives.

\begin{description}

\item[$\Rightarrow \lnot ^{B}$] Assume that $v\models _{\mathbf{B}_3}\alpha ,\Gamma
\Rightarrow \Delta $ for some valuation $v$ in $\mathbf{B}_3$, where $var\left( \alpha \right) \subseteq
var\left( \Gamma \right) $. Suppose that $v\left( \Gamma
\right) \subseteq \left\{ 1\right\} $. Then, by Proposition~\ref{infecciosidad}, $v\left( p\right) \in \left\{ 1,0\right\}$, for every propositional variable $p$ such that $p\in var\left( \Gamma
\right) $. Since $var\left( \alpha \right) \subseteq var\left( \Gamma
\right) $, then $v\left( p\right) \in \left\{ 1,0\right\}$, for every  propositional variable $p\in var\left( \alpha \right) $. By Proposition~\ref{infecciosidad} again, we obtain that $v\left(
\alpha \right) \in \left\{ 1,0\right\} $. If $v\left( \alpha
\right) =1$, then by hypothesis, we obtain that $\left\{ 1\right\} \subseteq
v\left( \Delta \right) $. If $v\left( \alpha
\right) =0$ then $v\left( \lnot \alpha \right) =1$.
In both cases it follows that $\left\{ 1\right\} \subseteq
v\left( \Delta \cup \left\{\lnot\alpha\right\} \right) $.  Therefore $v\models _{\mathbf{B}_3}\Gamma \Rightarrow
\Delta ,\lnot \alpha $.

\item[$\lnot \Rightarrow $] Assume that $v\models _{\mathbf{B}_3}\Gamma \Rightarrow
\Delta ,\alpha $ for some valuation $v$ in $\mathbf{B}_3$. Suppose
that $v\left( \lnot \alpha \right) =1$ and $v\left( \Gamma \right) \subseteq \left\{ 1\right\} $. So, $\left\{ 1\right\}
\subseteq v\left( \Delta \right) $ or $v\left(
\alpha \right) =1$, by hypothesis. But, since $v\left( \lnot \alpha \right)
=1$, then $v\left( \alpha \right) =0$. Thus, $\left\{ 1\right\}
\subseteq v\left( \Delta \right) $ and so $v\models _{\mathbf{B}_3}\lnot \alpha ,\Gamma \Rightarrow
\Delta $.

\item[$\Rightarrow \wedge$] Assume that $v\models _{\mathbf{B}_3}\Gamma
\Rightarrow \Delta ,\alpha _{1}$ and $v\models _{\mathbf{B}_3}\Gamma
\Rightarrow \Delta ,\alpha _{2}$  for some valuation $v$ in $\mathbf{B}_3$. Suppose that $v\left( \Gamma \right) \subseteq \left\{ 1\right\} $. By hypothesis, we
obtain that either $\left\{ 1\right\} \subseteq v\left( \Delta \right) 
$ or both $v\left( \alpha _{1}\right) =1$ and $v\left(
\alpha _{2}\right) =1$. In both cases it follows that $\left\{ 1\right\} \subseteq
v\left( \Delta \cup \left\{ \left( \alpha _{1}\wedge \alpha
_{2}\right) \right\} \right) $. Then $v\models _{\mathbf{B}_3}\Gamma
\Rightarrow \Delta ,\alpha _{1}\wedge \alpha _{2}$.

\item[$\wedge \Rightarrow $] Assume that $v\models _{\mathbf{B}_3}\alpha _{1},\alpha
_{2},\Gamma \Rightarrow \Delta $  for some valuation $v$ in $\mathbf{B}_3$. Suppose that 
$v\left(\alpha _{1}\wedge \alpha _{2}\right) =1$ and $v\left(\Gamma\right)
\subseteq \left\{ 1\right\} $. So, $v\left( \alpha _{1}\right) =v\left( \alpha _{2}\right) =1$ and $v\left( \Gamma \right)\subseteq \left\{ 1\right\} $. By hypothesis, $\left\{ 1\right\}
\subseteq v\left( \Delta \right) $. Therefore,  $v\models _{\mathbf{B}_3}\alpha _{1}\wedge \alpha _{2},\Gamma \Rightarrow \Delta $. 
\end{description}
\end{proof}

As a consequence of this it follows the soundness theorem for  \textbf{B}:

\begin{theorem}
[Soundness of  \textbf{B}] \label{correccionseqB} Let $\Gamma \cup \Delta$ be a finite nonempty subset of $For_2$. Then: if  $\Gamma \Rightarrow \Delta$ is provable in  \textbf{B} then $\models _{\mathbf{B}_3}\Gamma \Rightarrow \Delta$. In particular,  if  $\Gamma \Rightarrow \alpha$ is provable in  \textbf{B} then $\Gamma\models _{\mathbf{B}_3} \alpha$.
\end{theorem}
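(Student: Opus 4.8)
The plan is to mirror exactly the soundness argument already given for $\textbf{H}$ in Theorem~\ref{correccionseqH}. The structure is a routine induction on the depth of a derivation in $\textbf{B}$, with the inductive engine being Lemma~\ref{reglasBvalidas}, which establishes that every sequent rule of $\textbf{B}$ preserves validity in the sense of the preservation definition stated earlier.

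First I would handle the base case: if the sequent $\Gamma \Rightarrow \Delta$ is an instance of the axiom $\textrm{Ax}$, that is $\alpha \Rightarrow \alpha$, then trivially $\models_{\mathbf{B}_3} \Gamma \Rightarrow \Delta$, since any valuation $v$ with $v(\alpha) \in D_{\mathbf{B}_3}$ obviously has $v(\alpha) \in D_{\mathbf{B}_3}$ for the (same) succedent formula. Then I would proceed to the inductive step: supposing $\Gamma \Rightarrow \Delta$ is obtained by applying some rule $\mathfrak{R}$ of $\textbf{B}$ to premises that, by the induction hypothesis, are all valid in $\mathbf{B}_3$. Because $\textbf{B}$ consists only of the axiom, the structural rules, and the operational rules analyzed in Lemma~\ref{reglasBvalidas} (recall that the structural rules preserve validity since $\mathbf{B}_3$ is a Tarskian matrix logic, exactly as noted in the proof for $\textbf{H}$), the rule $\mathfrak{R}$ preserves validity, and hence $\models_{\mathbf{B}_3} \Gamma \Rightarrow \Delta$ as well. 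This completes the induction.

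Finally I would derive the ``in particular'' clause: taking $\Delta = \{\alpha\}$, we have just shown that provability of $\Gamma \Rightarrow \alpha$ in $\textbf{B}$ yields $\models_{\mathbf{B}_3} \Gamma \Rightarrow \alpha$, and by the remark following Definition~\ref{secuentevalido} (that $\models_{\mathbf{L}} \Gamma \Rightarrow \alpha$ iff $\Gamma \models_{\mathbf{L}} \alpha$) this is precisely $\Gamma \models_{\mathbf{B}_3} \alpha$.

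Here there is essentially no obstacle: the entire difficulty of soundness has already been discharged in Lemma~\ref{reglasBvalidas}, where the proviso $var(\alpha) \subseteq var(\Gamma)$ on the rule $\Rightarrow \lnot^{B}$ was used together with the infectiousness property (Proposition~\ref{infecciosidad}) to guarantee that the critical negation-introduction step stays valid in the designated-value regime $D_{\mathbf{B}_3} = \{1\}$. The remaining work is purely the bookkeeping of a structural induction, identical in form to the proof of Theorem~\ref{correccionseqH}, so the proof can be stated very briefly by appealing to Lemma~\ref{reglasBvalidas}.
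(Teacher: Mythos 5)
Your proof is correct and follows exactly the route the paper takes: the paper derives Theorem~\ref{correccionseqB} directly from Lemma~\ref{reglasBvalidas} by the same induction on derivation depth used for Theorem~\ref{correccionseqH}, with the axiom case trivial and the structural rules handled by the Tarskian-logic observation. Your explicit treatment of the base case, the inductive step, and the ``in particular'' clause via the remark following Definition~\ref{secuentevalido} just spells out what the paper leaves implicit.
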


\begin{corollary}
Let $\Delta \subseteq For_2$ be a nonempty set of formulas. Then the sequent $\ \Rightarrow \Delta$ is not provable in   \textbf{B}.
\end{corollary}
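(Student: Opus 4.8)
The plan is to use the Soundness theorem for $\textbf{B}$ (Theorem~\ref{correccionseqB}) together with a semantic argument about $\textbf{B}_3$-valuations, exactly dualizing the proof of the corresponding fact for $\textbf{H}$. First I would argue by contraposition: it suffices to show that the sequent $\Rightarrow \Delta$ is \emph{not} valid in $\textbf{B}_3$, since Theorem~\ref{correccionseqB} tells us that every sequent provable in $\textbf{B}$ is valid in $\textbf{B}_3$, whence an invalid sequent cannot be provable.

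To exhibit a valuation refuting $\Rightarrow \Delta$, I would recall that the only designated value of $\textbf{B}_3$ is $1$, i.e.\ $D_{\mathbf{B}_3}=\{1\}$. A valuation $v$ fails to be a model of the sequent $\Rightarrow \Delta$ precisely when $v(\delta)\notin\{1\}$ for every $\delta\in\Delta$ (the antecedent being empty, the condition $v(\emptyset)\subseteq D_{\mathbf{B}_3}$ holds vacuously). So I need a single valuation $v$ assigning a non-designated value to every formula in $\Delta$. The natural choice, exploiting the infectiousness of $\tfrac12$ (Proposition~\ref{infec}), is to set $v(p)=\tfrac12$ for every propositional variable $p$ occurring in any formula of $\Delta$; then by Proposition~\ref{infec} each $\delta\in\Delta$ satisfies $v(\delta)=\tfrac12$, since every $\delta$ contains at least one such variable.

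The only mild subtlety is that a formula could in principle be variable-free, but in the signature $\Sigma_2$ (only $\lnot$ and $\wedge$, with $\vee,\to$ defined) every formula is built from propositional variables, so each $\delta\in\Delta$ has a nonempty $var(\delta)$; hence the infectiousness argument applies uniformly. With this $v$ we get $v(\delta)=\tfrac12\notin D_{\mathbf{B}_3}$ for all $\delta\in\Delta$, so $v\nvDash_{\mathbf{B}_3}\Rightarrow\Delta$, and therefore $\nvDash_{\mathbf{B}_3}\Rightarrow\Delta$.

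The main (and in fact only) obstacle is simply making sure the empty-antecedent case of the model definition is handled correctly and that Proposition~\ref{infec} is invoked for formulas without the $\#$ connective, which is automatic here since $\Delta\subseteq For_2$. Concluding the argument, by contraposition of Theorem~\ref{correccionseqB} the non-validity of $\Rightarrow\Delta$ yields that $\Rightarrow\Delta$ is not provable in $\textbf{B}$, as claimed. This proof is the exact dual of the analogous non-provability result for $\textbf{H}$, with the roles of designated value $\tfrac12$ and the succedent replaced by the non-designated value $\tfrac12$ and the (here empty) antecedent.
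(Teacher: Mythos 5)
Your proof is correct and follows exactly the paper's own argument: take a $\mathbf{B}_3$-valuation with $v(p)=\tfrac12$ for every $p\in var(\Delta)$, use the infectiousness of $\tfrac12$ (Proposition~\ref{infec}) to conclude $v(\delta)=\tfrac12\notin D_{\mathbf{B}_3}$ for each $\delta\in\Delta$, and then apply contraposition of the Soundness Theorem~\ref{correccionseqB}. The paper leaves the appeal to infectiousness and the empty-antecedent vacuity implicit, which you spell out, but the approach is identical.
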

\begin{proof} Consider a valuation $v$ for  $\mathbf{B}_3$ such that $v(p)={\frac12}$ for every $p \in var(\Delta)$. Then $v\not\models _{\mathbf{B}_3} \Rightarrow \Delta $ and so $\not\models _{\mathbf{B}_3}\Rightarrow \Delta$. By Theorem~\ref{correccionseqB},
the sequent $\ \Rightarrow \Delta$ is not provable in   \textbf{B}.
\end{proof}

\subsection{Completeness of \textbf{B}}

The proof of completeness of  \textbf{B} is similar to that of  \textbf{H} and so we will omit some proofs.

\begin{proposition}
\label{BextensionC} Let $\Gamma \cup \Delta$ be a finite nonempty subset of $For_2$. Then: 
\[
\textrm{if }\models _{\mathbf{B}_3}\Gamma \Rightarrow \Delta \textrm{, then }%
\models _{\mathbf{CPL}}\Gamma \Rightarrow \Delta \textrm{.} 
\]
\end{proposition}

\begin{proposition}
\label{CextensionB} Let $\Gamma \cup \Delta $ be a finite nonempty subset of $For_2$. Then: 
if $\Gamma \Rightarrow \Delta$ is provable in \textbf{B}  then it is provable in the $\{\lnot,\wedge\}$-fragment of  \textbf{C}. 
\end{proposition}

\begin{lemma}
\label{variablesincluidasB} Let $\Gamma \cup \Delta$  be a finite nonempty subset of $For_2$. Then: 
if $\Gamma \Rightarrow \Delta$   is provable in the $\{\lnot,\wedge\}$-fragment of  \textbf{C} and 
$var\left( \Delta \right) \subseteq var\left( \Gamma \right)$  then $\Gamma \Rightarrow \Delta$ is provable in \textbf{B} without using the Cut rule.
\end{lemma}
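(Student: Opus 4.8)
The plan is to mirror the argument of Lemma~\ref{variablesincluidasH}, exchanging the roles of antecedent and succedent throughout. Derivations in \textbf{C} and \textbf{B} are rooted binary trees whose leaves are instances of the axiom $\alpha \Rightarrow \alpha$, and \textbf{B} differs from the $\{\lnot,\wedge\}$-fragment of \textbf{C} only by the proviso attached to the rule $\Rightarrow \lnot$. Hence it suffices to take a cut-free \textbf{C}-derivation and repair precisely those applications of $\Rightarrow \lnot$ whose \textbf{B}-proviso fails, without introducing any new offending steps.

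First I would invoke Theorem~\ref{cut-freeC} to obtain a cut-free derivation $\Pi$ in the $\{\lnot,\wedge\}$-fragment of \textbf{C} of the sequent $\Gamma \Rightarrow \Delta$. Because $\Pi$ is cut-free, every propositional variable occurring anywhere in $\Pi$ already occurs in the root sequent $\Gamma \Rightarrow \Delta$; by the hypothesis $var(\Delta) \subseteq var(\Gamma)$ this means every such variable belongs to $var(\Gamma)$. If $\Pi$ happens to already be a legal derivation in \textbf{B}, the result follows at once. Otherwise the only offending steps are instances
\[
\Rightarrow \lnot \quad
\begin{tabular}{l}
$\alpha, \Gamma' \Rightarrow \Delta'$ \\ \hline
$\Gamma' \Rightarrow \Delta', \lnot\alpha$
\end{tabular}
\]
for which the \textbf{B}-proviso $var(\alpha) \subseteq var(\Gamma')$ is violated.

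The key construction, dual to the \textbf{H}-case, enlarges $\Pi$ on the left. In a first step I would add the whole set $\Gamma$ simultaneously to the antecedent of every node of $\Pi$, producing a tree $\Pi_0$ whose leaves now read $\Gamma, \alpha \Rightarrow \alpha$. In a second step I would replace each such leaf by a short branch that starts from the axiom $\alpha \Rightarrow \alpha$ and reaches $\Gamma, \alpha \Rightarrow \alpha$ through iterated applications of the left weakening rule $W\Rightarrow$. One checks that every operational rule of the $\{\lnot,\wedge\}$-fragment carries its antecedent context along unchanged, so adding the fixed set $\Gamma$ to every antecedent turns each rule instance of $\Pi$ into a legal rule instance; hence the resulting tree $\Pi'$ is again a cut-free derivation of $\Gamma \Rightarrow \Delta$ in that fragment of \textbf{C} (working with sets, the occasional redundant insertion of a formula already present is harmless).

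Finally I would observe that each formerly offending step now appears in $\Pi'$ as
\[
\Rightarrow \lnot \quad
\begin{tabular}{l}
$\Gamma, \alpha, \Gamma' \Rightarrow \Delta'$ \\ \hline
$\Gamma, \Gamma' \Rightarrow \Delta', \lnot\alpha$
\end{tabular}
\]
whose active antecedent now contains $\Gamma$. Since every variable of $\Pi'$ lies in $var(\Gamma)$, the \textbf{B}-proviso $var(\alpha) \subseteq var(\Gamma \cup \Gamma') = var(\Gamma)$ is automatically satisfied, so all these steps are legal in \textbf{B}. Therefore $\Pi'$ is a cut-free derivation of $\Gamma \Rightarrow \Delta$ in \textbf{B}, as desired. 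I do not expect a genuine obstacle here: the only point requiring care is the routine verification that uniformly padding the antecedent by $\Gamma$ preserves each rule application, which holds because all the left/right rules of the $\{\lnot,\wedge\}$-fragment treat the antecedent context as an inert parameter.
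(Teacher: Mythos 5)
Your proposal is correct and follows essentially the same route as the paper, whose proof of this lemma is stated exactly as the dualization of Lemma~\ref{variablesincluidasH} with the left weakening rule $W\Rightarrow$ in place of $\Rightarrow W$: padding every antecedent with $\Gamma$, repairing the leaves $\Gamma,\alpha \Rightarrow \alpha$ via iterated $W\Rightarrow$ from the axiom, and noting that cut-freeness plus $var(\Delta)\subseteq var(\Gamma)$ forces every variable in the derivation into $var(\Gamma)$, so the formerly offending $\Rightarrow\lnot$ instances satisfy the \textbf{B}-proviso. Your added remarks (that set-based sequents make redundant insertions harmless, and that enlarging antecedents can never violate the proviso elsewhere) are exactly the points the paper leaves implicit.
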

\begin{proof}
The proof is analogous to that of Lemma~\ref{variablesincluidasH}, but now using the rule $W\Rightarrow$.
\end{proof}

\begin{lemma}
\label{variablesnoincluidasB} Let $\Gamma \cup \Delta$   be a finite nonempty subset of $For_2$. 
If $\models _{\mathbf{B}_3}\Gamma \Rightarrow \Delta$ but $var\left( \Delta
\right) \varsubsetneq var\left( \Gamma \right) $ then there exist a set $\Delta'\subset \Delta $ such that  $\models _{\mathbf{B}_3}\Gamma \Rightarrow \Delta'$, where $var\left( \Delta'\right) \subseteq
var\left( \Gamma \right) $.
\end{lemma}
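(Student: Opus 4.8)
The plan is to mirror the proof of Lemma~\ref{variablesnoincluidasH}, exploiting the perfect duality between \textbf{B} and \textbf{H}: everywhere the \textbf{H}-argument restricts $\Gamma$ to enforce $var(\Gamma')\subseteq var(\Delta)$, the \textbf{B}-argument will restrict $\Delta$ to enforce $var(\Delta')\subseteq var(\Gamma)$, and everywhere the \textbf{H}-argument uses the designated set $\{1,\frac12\}$ the \textbf{B}-argument uses $\{1\}$. First I would dispose of the degenerate case: since $\models_{\mathbf{B}_3}\Gamma\Rightarrow\Delta$ and the antecedent can always be satisfied (take any valuation sending every variable of $\Gamma$ to $1$, using Proposition~\ref{classi} to keep $v(\Gamma)\subseteq\{1\}$), we must have $\Delta\neq\emptyset$.

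Next I would define the candidate subset by discarding from $\Delta$ exactly those formulas whose variables are not controlled by $\Gamma$. Concretely, set
\[
\Delta'=\Delta\setminus\{\delta\in\Delta\ :\ var(\delta)\varsubsetneq var(\Gamma)\},
\]
so that by construction $\Delta'\subseteq\Delta$ and $var(\Delta')\subseteq var(\Gamma)$; and since $var(\Delta)\varsubsetneq var(\Gamma)$ fails for at least one formula, $\Delta'$ is a proper subset. The heart of the proof is then to verify $\models_{\mathbf{B}_3}\Gamma\Rightarrow\Delta'$. So let $v$ be any $\mathbf{B}_3$-valuation with $v(\Gamma)\subseteq\{1\}$; I must produce $\delta\in\Delta'$ with $v(\delta)=1$. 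By Proposition~\ref{infec}, $v(\Gamma)\subseteq\{1\}$ forces $v(p)\in\{1,0\}$ for every $p\in var(\Gamma)$, hence in particular for every $p\in var(\Delta')$.

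The key manoeuvre, dual to the one in Lemma~\ref{variablesnoincluidasH}, is to modify $v$ on the variables of $\Delta$ lying \emph{outside} $var(\Gamma)$. I would define a valuation $v'$ by setting $v'(p)=\frac12$ for every $p\in var(\Delta)\setminus var(\Gamma)$ and $v'(p)=v(p)$ otherwise. Then $v'$ still satisfies $v'(\Gamma)\subseteq\{1\}$ (since $v'$ agrees with $v$ on $var(\Gamma)$), so by the global hypothesis $\models_{\mathbf{B}_3}\Gamma\Rightarrow\Delta$ there is some $\delta\in\Delta$ with $v'(\delta)=1$. Now any $\delta$ with a variable in $var(\Delta)\setminus var(\Gamma)$ gets $v'(\delta)=\frac12$ by the infectiousness Proposition~\ref{infec}, hence cannot be the witness; so the witnessing $\delta$ must satisfy $var(\delta)\subseteq var(\Gamma)$, i.e.\ $\delta\in\Delta'$. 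Finally, since $v$ and $v'$ agree on $var(\Gamma)\supseteq var(\Delta')\supseteq var(\delta)$, we get $v(\delta)=v'(\delta)=1$, which is exactly what was needed. This establishes $\models_{\mathbf{B}_3}\Gamma\Rightarrow\Delta'$.

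The only subtlety to watch, and the step I expect to require the most care, is the bookkeeping on variable sets when arguing that the witness $\delta$ falls in $\Delta'$: one must confirm that a formula survives into $\Delta'$ precisely when all its variables are controlled by $\Gamma$, so that feeding the offending variables the infectious value $\frac12$ eliminates every non-surviving formula as a possible witness while leaving the surviving ones evaluated identically under $v$ and $v'$. Since infectiousness (Proposition~\ref{infec}) propagates $\frac12$ upward through any complex formula, the argument goes through cleanly, and the rest is the routine duality transcription already flagged in the remark preceding Proposition~\ref{BextensionC}.
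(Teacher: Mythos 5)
Your proof is correct and is essentially the paper's own argument: you take the very same $\Delta'=\Delta\setminus\left\{ \delta \in \Delta \ : \ var\left( \delta\right) \varsubsetneq var\left( \Gamma \right) \right\}$ (reading $\varsubsetneq$ as non-inclusion, as the paper does) and the very same modified valuation $v'$, agreeing with $v$ on $var\left(\Gamma\right)$ and assigning $\frac12$ on $var\left(\Delta\right)\setminus var\left(\Gamma\right)$; the paper merely casts the final step as a reductio, where you argue directly that the witness $\delta$ with $v'\left(\delta\right)=1$ must lie in $\Delta'$ and, by agreement of $v$ and $v'$ on $var\left(\delta\right)\subseteq var\left(\Gamma\right)$, satisfies $v\left(\delta\right)=1$. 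One side remark in your opening is false, though harmless: in $\mathbf{B}_3$ (unlike $\mathbf{H}_3$) the antecedent of a sequent cannot always be satisfied --- for $\Gamma=\{p\wedge\lnot p\}$ no valuation yields $v\left(\Gamma\right)\subseteq\{1\}$, and in general sending every variable of $\Gamma$ to $1$ does not make $v\left(\Gamma\right)\subseteq\{1\}$ (consider $\lnot p$), Proposition~\ref{classi} only guaranteeing a classical value, not the value $1$ --- so your justification of $\Delta\neq\emptyset$ fails; but this costs nothing, since $\Delta\neq\emptyset$ follows at once from the hypothesis that $var\left(\Delta\right)$ is not included in $var\left(\Gamma\right)$, and your main argument never actually uses it.
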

\begin{proof} Let $\Delta'
=\Delta \setminus\left\{ \delta \in \Delta \ : \ var\left( \delta\right) \varsubsetneq var\left( \Gamma \right) \right\}$. Suppose that there is a  $\mathbf{B}_3$-valuation $v$ such that $v\left(\Gamma \right)\subseteq\left\{1 \right\}$ but
$v\left(\Delta' \right)\subseteq\left\{0,{\frac12} \right\}$.  Thus, the  $\mathbf{B}_3$-valuation $v'$ such that $v'\left(p \right)=v\left(p \right)$ for every $p \in var\left( \Gamma \right)$ and $v'\left(p' \right)=
{\frac12}$ for every $p' \in var\left( \Delta\right) \setminus var\left( \Gamma \right)$ is such that
$v'\left(\Gamma \right)\subseteq\left\{1 \right\}$ but
$v'\left(\Delta \right)\subseteq\left\{0,{\frac12} \right\}$, a contradiction. Therefore $\models _{\mathbf{B}_3}\Gamma \Rightarrow \Delta'$, where $var\left( \Delta'\right) \subseteq
var\left( \Gamma \right) $.
\end{proof}

\begin{theorem}
[Completeness of \textbf{B}] \label{completudsecB} Let $\Gamma \cup \Delta$  be a finite nonempty subset of $For_2$. If $\models _{\mathbf{B}_3}\Gamma \Rightarrow \Delta$  then $\Gamma \Rightarrow \Delta$  is provable in \textbf{B} without using the Cut rule. In particular, if $\Gamma \models _{\mathbf{B}_3}\alpha$ then $\Gamma \Rightarrow \alpha$  is provable in~\textbf{B}.
\end{theorem}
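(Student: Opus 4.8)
The plan is to mirror, in a dual fashion, the completeness proof for \textbf{H} (Theorem~\ref{completudsecH}), interchanging the roles of the antecedent and the succedent of sequents and replacing the left weakening rule $W\Rightarrow$ by the right weakening rule $\Rightarrow W$. First I would assume $\models_{\mathbf{B}_3}\Gamma \Rightarrow \Delta$ and reduce the problem to classical provability: by Proposition~\ref{BextensionC} we obtain $\models_{\mathbf{CPL}}\Gamma \Rightarrow \Delta$, and then Theorem~\ref{compleCPL} yields a derivation of $\Gamma \Rightarrow \Delta$ in the $\{\lnot,\wedge\}$-fragment of \textbf{C}.

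Next I would split into two cases according to the inclusion of variables. If $var(\Delta) \subseteq var(\Gamma)$, then Lemma~\ref{variablesincluidasB} immediately converts the classical derivation into a cut-free derivation in \textbf{B}, and the proof is finished. If instead $var(\Delta)\varsubsetneq var(\Gamma)$, I would invoke Lemma~\ref{variablesnoincluidasB} to obtain a proper subset $\Delta'\subset\Delta$ with $\models_{\mathbf{B}_3}\Gamma\Rightarrow\Delta'$ and $var(\Delta')\subseteq var(\Gamma)$. Applying Proposition~\ref{BextensionC} and Theorem~\ref{compleCPL} once more gives a derivation of $\Gamma\Rightarrow\Delta'$ in the $\{\lnot,\wedge\}$-fragment of \textbf{C}; since now the variable-inclusion hypothesis holds, Lemma~\ref{variablesincluidasB} produces a cut-free derivation of $\Gamma\Rightarrow\Delta'$ in \textbf{B}. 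Finally, because $\Delta'\subset\Delta$, repeated applications of the unrestricted right weakening rule $\Rightarrow W$ restore the missing formulas of $\Delta$, yielding the desired cut-free derivation of $\Gamma\Rightarrow\Delta$ in \textbf{B}.

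The genuinely delicate point is the second case. One cannot in general retain all of $\Delta$, because the rule $\Rightarrow \lnot^{B}$ carries the proviso $var(\alpha)\subseteq var(\Gamma)$, so a succedent formula whose variables are not already contained in $var(\Gamma)$ may block the \textbf{B}-reconstruction of the classical proof. The role of Lemma~\ref{variablesnoincluidasB} is precisely to show that such \emph{rogue} succedent formulas are semantically redundant and may be stripped away while preserving $\mathbf{B}_3$-validity; once removed, the hypothesis of Lemma~\ref{variablesincluidasB} is met, and the discarded formulas can be harmlessly re-added at the very end by weakening, which is unconstrained on the right. Everything else is a routine dualization of the argument already carried out for \textbf{H}. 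The final clause then follows by taking $\Delta=\{\alpha\}$ and recalling that $\models_{\mathbf{B}_3}\Gamma\Rightarrow\alpha$ holds iff $\Gamma\models_{\mathbf{B}_3}\alpha$.
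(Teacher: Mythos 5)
Your proposal is correct and matches the paper's own proof essentially step for step: reduce to classical provability via Proposition~\ref{BextensionC} and Theorem~\ref{compleCPL}, case-split on whether $var(\Delta)\subseteq var(\Gamma)$, prune the succedent with Lemma~\ref{variablesnoincluidasB} in the bad case, apply Lemma~\ref{variablesincluidasB} to the pruned sequent, and recover $\Delta$ by repeated applications of the right weakening rule $\Rightarrow W$. Your added remark explaining why the pruning is forced by the proviso $var(\alpha)\subseteq var(\Gamma)$ on $\Rightarrow\lnot^{B}$ is accurate and consistent with the paper's intent.
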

\begin{proof}
Assume that $\models _{\mathbf{B}_3}\Gamma \Rightarrow \Delta $. Then, by Proposition~\ref{BextensionC} and Theorem~\ref{compleCPL}, it follows that  $\Gamma \Rightarrow
\Delta $  is provable  in the $\{\lnot,\wedge\}$-fragment of  \textbf{C}.  If $var\left( \Delta \right) \subseteq var\left( \Gamma \right) $ then, by Lemma \ref{variablesincluidasB},  the sequent $\Gamma \Rightarrow \Delta $ is provable  in \textbf{B} without using the Cut rule. On the other hand, if  $var\left(
\Delta \right) \varsubsetneq var\left( \Gamma \right) $, then by Lemma~\ref{variablesnoincluidasB}, $\vDash _{\mathbf{B}_3}\Gamma \Rightarrow \Delta' $,
for some set $\Delta'\subset \Delta $ such that $var\left( \Delta'\right) \subseteq var\left( \Gamma \right) $. 
By Proposition~\ref{BextensionC} and Theorem~\ref{compleCPL} again, it follows that  $\Gamma \Rightarrow
\Delta'$  is provable  in the $\{\lnot,\wedge\}$-fragment of  \textbf{C}. Using again  Lemma~\ref{variablesincluidasB},  the sequent $\Gamma \Rightarrow \Delta'$ is provable  in \textbf{B} without using the Cut rule. Finally,
by applying the structural rule $\Rightarrow W$ several times we obtain  a derivation of  $\Gamma \Rightarrow \Delta$ in \textbf{B} without using the Cut rule.
\end{proof}

\begin{corollary}
[Cut elimination for \textbf{B}] \label{cut-freeB} Let $\Gamma \cup \Delta $ be a finite nonempty set of formulas in $For_2$.
If the sequent $\Gamma \Rightarrow \Delta$ is provable in \textbf{B} then there is a cut-free derivation of it in  \textbf{B}.
\end{corollary}

\begin{proof}
Suppose that $\Gamma \Rightarrow \Delta$ is provable in \textbf{B}. By Theorem~\ref{correccionseqB}, $\models _{\mathbf{B}_3}\Gamma \Rightarrow \Delta$. Then, by Theorem~\ref{completudsecB}, there is  a cut-free  derivation of  $\Gamma \Rightarrow \Delta$ in \textbf{B} as desired. 
\end{proof}

\section{Concluding Remarks}

In this paper a cut-free sequent calculi for the  $\{\lnot,\vee\}$-fragment of Bochvar's logic,
as well as a cut-free sequent calculi for the  $\{\lnot,\wedge\}$-fragment of Halld\'en's logic, were proposed.
In the former calculus,  conjunction and implication are derived connectives, while disjunction and implication are derived connectives in the latter. The main feature of both calculi is that they are obtained by imposing provisos to the rules of the respective fragments of a well-known sequent calculus for classical propositional logic. The signature for each calculus was choosen in order to keep as close as possible to the respective fragment of classical logic. Observe that both  $\{\lnot,\vee\}$ and $\{\lnot,\wedge\}$-fragments are adequate, that is, they can express all the other (classical) connectives. 

Thus, concerning the calculus for the $\{\lnot,\vee\}$-fragment of  Halld\'en's logic, the only change required with respect to the calculus for  the respective fragment of classical logic was the inclusion of a proviso in the introduction rule for negation on the left side of the sequent. As a consequence of this, a proviso appear in the (derived) introduction rules for conjunction and implication on the left side of the sequent.

In the calculus for the  $\{\lnot,\wedge\}$-fragment of  Bochvar's logic, the situation is entirely symmetrical: the restriction was imposed to the introduction rule for negation on the right side, and so this restriction also applies to  the introduction rules for disjunction and implication on the right side (both are derived rules).  In this manner, the existing relationship  between classical logic and both logics  became explicit through restrictions on the rules for the logical connectives.

Since these two logic of nonsense are related to classical logic in such particular way,  the {\em ad hoc} definition of sequent calculi presented here, which exploit these particularities, seems to be justified. However, it would be interesting to compare the cut-free sequent calculi introduced here with the ones which could be obtained by applying general techniques such as those proposed in~\cite{baa:fer:zac:93,avr:ben:kon:07,vol:mar:cal:12}.

As a future research, we plan to extend the calculi to  the full language of both logics. Clearly the resulting calculi will not be so simple and symmetrical because of  the subtleties of the `meaningful' connectives and their relationship with the other connectives.

\

\

\noindent {\bf Acknowledgements:}  We would like to thank the anonymous referees for their extremely useful  comments on an earlier draft, which have helped to improve the paper. The first author was financed
by FAPESP (Brazil), Thematic Project LogCons 2010/51038-0 and by
an individual research grant from The National Council for
Scientific and Technological Development (CNPq), Brazil.

\nocite{*}
\bibliographystyle{eptcs}
\bibliography{refer}
\end{document}